\newtheorem{theorem}{Theorem}[section]
\newtheorem{lemma}[theorem]{Lemma}
\newtheorem{proposition}[theorem]{Proposition}
\newtheorem{corollary}[theorem]{Corollary}
\newenvironment{proof}[1][Proof]{\begin{trivlist}
\item[\hskip \labelsep {\bfseries #1}]}{\end{trivlist}}
\newcommand{\Rmnum}[1]{\expandafter\@slowromancap\romannumeral #1@}
\begin{document}

%
% paper title
% can use linebreaks \\ within to get better formatting as desired
\title{Linear codes over $\mathbb F_q$ which are equivalent to  LCD codes}

\author{ Claude Carlet$^1$ \and Sihem Mesnager$^2$ \and Chunming Tang$^3$ \and  Yanfeng Qi$^4$
% <-this % stops a space
\thanks{This work was supported by SECODE project and
the National Natural Science Foundation of China
(Grant No. 11401480, 11531002). C. Tang
also acknowledges support from 14E013 and
CXTD2014-4 of China West Normal University.
Y. Qi also acknowledges support from Zhejiang provincial Natural Science Foundation of China (LQ17A010008).
}

\thanks{C. Carlet is with Department of Mathematics, Universities of Paris VIII and XIII, LAGA, UMR 7539, CNRS, Sorbonne Paris Cit\'{e}. e-mail: claude.carlet@univ-paris8.fr}

\thanks{S. Mesnager is with Department of Mathematics, Universities of Paris VIII and XIII and Telecom ParisTech, LAGA, UMR 7539, CNRS, Sorbonne Paris Cit\'{e}. e-mail: smesnager@univ-paris8.fr}
\thanks{C. Tang is with School of Mathematics and Information, China West Normal University, Nanchong, Sichuan,  637002, China. e-mail: tangchunmingmath@163.com
}

\thanks{Y. Qi is with School of Science, Hangzhou Dianzi University, Hangzhou, Zhejiang, 310018, China.
e-mail: qiyanfeng07@163.com
}

}

%

% <-this % stops a space

\maketitle

\begin{abstract}
Linear codes with complementary duals (abbreviated LCD) are linear codes whose intersection with their dual are trivial.
When they are binary, they play an important role in armoring implementations against side-channel attacks and fault injection attacks. Non-binary LCD codes in characteristic 2 can be transformed into binary LCD codes by expansion. In this paper, we introduce a general construction of LCD codes from any linear codes. Further, we show that any linear code over
$\mathbb F_{q} (q>3)$ is equivalent to an Euclidean LCD code and any linear code over
$\mathbb F_{q^2} (q>2)$ is equivalent to a Hermitian LCD code. Consequently an $[n,k,d]$-linear Euclidean LCD code over $\mathbb F_q$ with $q>3$ exists if there is an $[n,k,d]$-linear code over $\mathbb F_q$ and an $[n,k,d]$-linear Hermitian LCD code over $\mathbb F_{q^2}$ with $q>2$ exists if there is an $[n,k,d]$-linear code over $\mathbb F_{q^2}$. Hence, when $q>3$ (resp.$q>2$)  $q$-ary Euclidean
(resp. $q^2$-ary Hermitian) LCD codes possess the same asymptotical bound  as $q$-ary linear codes (resp. $q^2$-ary linear codes). Finally,
we present an approach of constructing LCD codes by extending linear codes.

\end{abstract}
% IEEEtran.cls defaults to using nonbold math in the Abstract.
% This preserves the distinction between vectors and scalars. However,
% if the journal you are submitting to favors bold math in the abstract,
% then you can use LaTeX's standard command \boldmath at the very start
% of the abstract to achieve this. Many IEEE journals frown on math
% in the abstract anyway.

% Note that keywords are not normally used for peerreview papers.
\begin{IEEEkeywords}
 Linear codes,  complementary dual, LCD codes,
 Euclidean LCD codes, Hermitian LCD codes
\end{IEEEkeywords}

% For peer review papers, you can put extra information on the cover
% page as needed:
% \ifCLASSOPTIONpeerreview
% \begin{center} \bfseries EDICS Category: 3-BBND \end{center}
% \fi
%
% For peerreview papers, this IEEEtran command inserts a page break and
% creates the second title. It will be ignored for other modes.
\IEEEpeerreviewmaketitle

\section{Introduction}

A linear  complementary dual  code (abbreviated LCD) is defined as a linear code $\mathcal C$ whose dual code $\mathcal C ^ \perp$ satisfies $\mathcal C \cap \mathcal C^ \perp=\{0\}$.
LCD codes have been widely applied in data storage, communications systems, consumer electronics, and cryptography. In \cite{mas92}, Massey showed that LCD codes  provide an optimum linear coding solution for the two-user binary adder channel. Recently, Carlet and Guilley  \cite{CG14} investigated an interesting application of binary LCD codes against side-channel attacks (SCA) and fault injection attacks (FIA) and presented several constructions of LCD codes. They showed in particular that non-binary LCD codes in characteristic 2 can be transformed into binary LCD codes by expansion.  It is then important to keep in mind that, for SCA, the most interesting case is when $q$ is even.

LCD codes are also interesting objects in the general framework of algebraic coding. For asymptotical optimality and bounds of LCD codes, Massey \cite{mas92} showed that there exist asymptotically good LCD codes. Tzeng and Hartmann \cite{TH70} proved that the minimum distance of a class of LCD codes is greater than that given by the BCH bound. Sendrier \cite{Sen04} showed that LCD codes meet the asymptotic Gilbert-Varshamov bound using properties of the hull dimension spectrum of linear codes. Dougherty et al. \cite{DKO15} gave a linear programming bound on the largest size of an LCD code of given length and minimum distance. Recently, Galvez et al. \cite{GKL17} studied the maximum minimum distance of LCD codes of fixed length and dimension.

Many works have been devoted to the characterization and constructions of LCD codes.  Yang and Massey  provided in \cite{YM94} a necessary and sufficient condition under which a cyclic code has a complementary dual. In \cite{GOS16}, quasi-cyclic codes that are LCD have been characterized and studied using their concatenated structures. Criteria for complementary duality of  generalized quasi-cyclic codes (GQC) bearing on the component codes are given and some explicit long GQC that are LCD, but not quasi-cyclic, have been exhibited in \cite{OOS17}.  In \cite{DNS17}, Dinh, Nguyend and Sriboonchitta  investigated the algebraic structure of $\lambda$-constacyclic codes over  finite commutative semi-simple rings. Among others, necessary and sufficient conditions for the existence of  LCD,   $\lambda$-constacyclic codes over such finite semi-simple rings have been provided. In \cite{DLL16}, Ding et al. constructed several families of LCD cyclic codes over finite fields and analyzed their parameters. In \cite{LDL16_1} Li et al. studied a class of LCD BCH codes proposed in \cite{LDL16_0} and extended the results on their parameters.
Mesnager et al. \cite{MTQ16} provided a construction of algebraic geometry LCD codes which could be good candidates to be resistant against SCA.  Liu and Liu constructed LCD matrix-product codes using quasi-orthogonal matrices in \cite{LL16}. It was also shown by Kandasamy et al. \cite{KSS12} that maximum rank distance codes generated by the trace-orthogonal-generator matrices are LCD codes. However, little is known on Hermitian LCD codes. More precisely,  it has been  proved in \cite{GOS16} that those codes are asymptotically good. By employing their generator matrices, Boonniyoma and Jitman gave in \cite{BJ16} a sufficient and necessary condition on Hermitian codes for being LCD. Li \cite{Li17} constructed some cyclic Hermitian LCD codes over finite fields and analyzed their parameters.

%A lot of LCD codes are constructed from linear codes, which are equivalent to LCD code.

 In \cite{Jin16}, Jin showed that some Reed-Solomon codes are equivalent to LCD codes. In \cite{CMTQ17}, the
authors proved that any MDS code is  equivalent to an LCD code. Very recently, Jin and Xing \cite{JX17} showed that an algebraic geometry code over $\mathbb F_{2^m} (m\ge 7)$ is equivalent to an LCD code. As a
consequence, they proved that there exists a family of LCD codes which is equivalent to algebraic geometry codes exceeding the asymptotical Gilbert-Varshamov bound.

The main goal of this manuscript  is to study all possible Euclidean and Hermitian LCD  codes. We completely determine all Euclidean LCD codes over $\mathbb F_q (q>3)$ and all Hermitian LCD codes over $\mathbb F_{q^2} (q>2)$   for all possible parameters.
More precisely,  we introduce a general construction of LCD codes from any linear codes. Further, we show that any linear code over
$\mathbb F_{q} (q>3)$ is equivalent to an Euclidean LCD code and any linear code over
$\mathbb F_{q^2} (q>2)$ is equivalent to a Hermitian LCD code. Consequently an $[n,k,d]$-linear   Euclidean LCD code over $\mathbb F_q$ with $q>3$ exists if there is an $[n,k,d]$-linear code over $\mathbb F_q$ and an $[n,k,d]$-linear Hermitian LCD code over $\mathbb F_{q^2}$ with $q>2$ exists if there is an $[n,k,d]$-linear code over $\mathbb F_{q^2}$. Hence, when $q>3$, $q$-ary Euclidean LCD codes are as good as $q$-ary linear codes and when $q>2$, $q^2$-ary Hermitian
LCD codes are as good as $q^2$-ary linear codes.

The paper is organized as follows.
Section \ref{Pre} gives preliminaries and background on  Euclidean and Hermitian LCD codes.
In Section \ref{det-matrix}, we present a result about matrices.  In  Section \ref{LCD-code}, we firstly provide a construction of Euclidean (resp. Hermitian) LCD codes from any linear codes.
 Based on these results  we completely determine all Euclidean and Hermitian LCD MDS codes for all possible parameters. In addition, we also show one can construct LCD codes by
 extending linear codes.
\section{Preliminaries}\label{Pre}

Throughout this paper,  $p$ is a prime and $\mathbb F_q$  is the finite field of order $q$, where $q=p^m$ for some positive integer $m$. The set of non-zero elements of $\mathbb F_q$ is denoted by $\mathbb F^\times_q$. For any $x \in \mathbb F_{q^2}$, the conjugate of $x$ is defined as $\overline{x}= x^{q}$. For a matrix $A$,
$A^{T}$  denotes the transposed matrix of matrix $A$, $\overline A$  denotes the conjugate matrix of $A$ and $\text{Rank}(A)$ denotes the rank of $A$.  When $A$ is a square matrix, $\det A$ denotes the determinant of
$A$. We denote by $\# I$  the cardinality of a finite set $I$.
An $[n, k, d]$ linear code $\mathcal C$ over $\mathbb F_q$ is a linear subspace of
$\mathbb F_q$ with dimension $k$ and minimum (Hamming) distance $d$. The value $n-k$ is called the \emph{codimension} of $\mathcal C$.
Given a linear code $\mathcal C$ of length $n$ over $\mathbb F_{q}$ (resp.  $\mathbb F_{q^2}$), its Euclidean dual code (resp. Hermitian dual code) is denoted by $\mathcal C^ {\perp}$ (resp. $\mathcal C^ {\perp_H}$). The codes $\mathcal C^ {\perp}$  and $\mathcal C^ {\perp_H}$ are defined by
\[\mathcal C^ {\perp}=\{ (b_0, b_1, \cdots, b_{n-1})\in \mathbb F_{q}^n:  \sum _{i=0}^{n-1} b_i  c_i=0, \forall (c_0, c_1, \cdots, c_{n-1}) \in \mathcal C \}, \]

\[\mathcal C^ {\perp_H}=\{ (b_0, b_1, \cdots, b_{n-1})\in \mathbb F_{q^2}^n:  \sum _{i=0}^{n-1} b_i \overline c_i=0, \forall (c_0, c_1, \cdots, c_{n-1}) \in \mathcal C \}, \]
respectively.

 The minimum distance of an $[n, k, d]$ linear code is bounded by the Singleton bound
\[d\le n+1-k.\]
A  code meeting the above bound is called \emph{Maximum Distance Separable} (MDS).

The Euclidean (resp. Hermitian) hull of a linear code $\mathcal C$ is defined to be $\text{Hull}_{E}(\mathcal C):= \mathcal C \cap \mathcal C^ {\perp}$
(resp. $\text{Hull}_{H}(\mathcal C):=\mathcal C \cap \mathcal C^ {\perp_H}$). Let $h_{E}(\mathcal C)$ (resp. $h_{H}(\mathcal C)$) be the dimension of $\text{Hull}_{E}(\mathcal C)$
(resp. $\text{Hull}_{H}(\mathcal C)$).
A linear code $\mathcal C$ over $\mathbb F_{q}$ is called an \emph{LCD code} (or for short, LCD code) if $h_{E}(\mathcal C)=0$.
A linear code $\mathcal C$ over $\mathbb F_{q^2}$ is called a \emph{Hermitian LCD code} (linear code with Hermitian complementary dual) if  $h_{H}(\mathcal C)=0$. To distinguish between cassical LCD codes and Hermitian ones, we shall call  \emph{Euclidean LCD codes} in the former case.
The following  proposition gives a complete characterization of
 Euclidean and Hermitian LCD codes (see. \cite{BJ16,CG14}).
\begin{proposition}\label{LDC}
 If $G$ is a generator matrix for the $[n, k]$ linear code $\mathcal C$, then $\mathcal C$ is an Euclidean (resp. a Hermitian) LCD code if and only if, the $k \times k$ matrix $G G^{T}$
 (resp. $G\overline{G}^{T}$) is nonsingular.
\end{proposition}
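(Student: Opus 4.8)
The plan is to translate the condition $\text{Hull}_{E}(\mathcal C)=\{0\}$ into a rank condition on $GG^{T}$ by describing the hull explicitly in terms of the generator matrix. Write a generic codeword of $\mathcal C$ as $xG$ with $x\in\mathbb F_q^k$; since $\text{Rank}(G)=k$, the map $x\mapsto xG$ is an injective linear map from $\mathbb F_q^k$ onto $\mathcal C$. A vector $xG$ lies in $\mathcal C^{\perp}$ precisely when it is orthogonal to every row of $G$, i.e. when $xG\,G^{T}=0$. Hence $\text{Hull}_{E}(\mathcal C)=\{\,xG:\ x(GG^{T})=0\,\}$, and by injectivity of $x\mapsto xG$ we obtain $h_{E}(\mathcal C)=\dim\{\,x\in\mathbb F_q^k:\ x(GG^{T})=0\,\}=k-\text{Rank}(GG^{T})$. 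Therefore $h_{E}(\mathcal C)=0$ if and only if $\text{Rank}(GG^{T})=k$, i.e. the $k\times k$ matrix $GG^{T}$ is nonsingular, which is the Euclidean claim.

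For the Hermitian statement I would repeat the same argument with the Hermitian inner product. Here $xG\in\mathcal C^{\perp_H}$ means $\sum_{i} (xG)_i\,\overline{(yG)_i}=0$ for all $y\in\mathbb F_{q^2}^k$, i.e. $xG\,\overline{G}^{T}\,\overline{y}^{T}=0$ for all $y$. Because $y\mapsto\overline{y}$ is a bijection of $\mathbb F_{q^2}^k$, this holds for every $y$ exactly when $xG\overline{G}^{T}=0$, so $\text{Hull}_{H}(\mathcal C)=\{\,xG:\ x(G\overline{G}^{T})=0\,\}$ and, as before, $h_{H}(\mathcal C)=k-\text{Rank}(G\overline{G}^{T})$; thus $h_{H}(\mathcal C)=0$ if and only if $G\overline{G}^{T}$ is nonsingular.

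I do not anticipate a serious obstacle here: the only point requiring care is the use of $\text{Rank}(G)=k$ to pass freely between $x$ and $xG$, so that the hull has the same dimension as the left kernel of $GG^{T}$ (resp. $G\overline{G}^{T}$), together with the elementary fact that a subspace $V\subseteq\mathbb F_q^n$ satisfies $V\cap V^{\perp}=\{0\}$ if and only if $\dim(V\cap V^{\perp})=0$; there is no need to invoke $V\oplus V^{\perp}=\mathbb F_q^n$, although that equivalence is also available. In the write-up it would be worth emphasizing that $GG^{T}$ (resp. $G\overline{G}^{T}$) genuinely can be singular — for instance for self-orthogonal codes, and typically in characteristic $2$ — so the proposition is not vacuous, and this is precisely the phenomenon the rest of the paper is designed to circumvent by passing to an equivalent code.
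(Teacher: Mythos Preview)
Your argument is correct and is the standard one: identifying $\text{Hull}_{E}(\mathcal C)$ (resp.\ $\text{Hull}_{H}(\mathcal C)$) with the image under $x\mapsto xG$ of the left kernel of $GG^{T}$ (resp.\ $G\overline{G}^{T}$), and then using $\text{Rank}(G)=k$ to equate dimensions, is exactly how this characterization is usually established. Note, however, that the paper does not supply its own proof of this proposition; it is stated as a known fact with a reference to \cite{BJ16,CG14}, so there is nothing in the paper to compare your approach against.
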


For any prime power $q$, let $\alpha_q^{lin}$ and $\alpha_q^{E}$ denote the functions which are defined by
\begin{align*}
\alpha_q^{lin}(\delta):=\sup \{R\in [0,1] : (\delta, R)\in U_q^{lin}\}, \text{for } 0\le \delta \le 1,
\end{align*}
and
\begin{align*}
\alpha_q^{E}(\delta):=\sup \{R\in [0,1] : (\delta, R)\in U_q^{E}\}, \text{for } 0\le \delta \le 1.
\end{align*}
Here $U_q^{lin}$ (resp. $U_q^{E}$) is the set of all ordered pairs $(\delta, R)\in [0,1]^2$ for which there exists a sequence of $[n_i,k_i, d_i]$ linear code (resp. Euclidean LCD code) $\mathcal C_i$ over $\mathbb F_q$ such that $n_i \rightarrow \infty$ as $i \rightarrow \infty$ and
\begin{align*}
\delta=\lim_{i \rightarrow \infty}  \frac{d_i}{n_i},~~R=\lim_{i \rightarrow \infty}  \frac{k_i}{n_i}.
\end{align*}
$\alpha_q^{lin}$  (resp. $\alpha_q^{E}$) is the largest asymptotic information rate that can be achieved for a given asymptotic relative minimum distance $\delta$ of $q$-ary linear codes
(resp. Euclidean LCD codes). We can define $\alpha_{q^2}^{H}(\delta)$ for Hermitian LCD codes over $\mathbb F_{q^2}$ in a similar way. It is trivial to see
that $\alpha_{q}^{lin}(\delta)\ge \alpha_{q}^{E}(\delta)$ and $\alpha_{q^2}^{lin}(\delta)\ge \alpha_{q^2}^{H}(\delta)$.

A classical lower bound for $\alpha_{q}^{lin}$ is the asymptotical Gilbert-Varshamov (GV) bound \cite{MS77}.
\begin{proposition}
For any prime power $q$, we have
\begin{align*}
\alpha_q^{lin} \ge R_{GV}(\delta)=1-H_q(\delta), \text{ for } 0 < \delta < \frac{q-1}{q},
\end{align*}
where $H_q(x)=x\log_{q}(q-1)-x \log_{q}(x)-(1-x) \log_{q}(1-x)$ is the $q$-ary entropy function.
\end{proposition}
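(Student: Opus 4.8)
The plan is to derive this classical bound from two standard ingredients: the finite Gilbert--Varshamov existence bound for linear codes, proved by greedily building a parity-check matrix, and a Stirling-type estimate for the exponential growth rate of the volume of a Hamming ball. One then lets the block length tend to infinity.

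\emph{Step 1 (finite bound).} I would first prove: if $n>k\ge 1$ and $d\ge 2$ satisfy
\[
\sum_{i=0}^{d-2}\binom{n-1}{i}(q-1)^{i}<q^{\,n-k},
\]
then there is a linear code over $\mathbb F_q$ with parameters $[n,k,\ge d]$. The idea is to build, one column at a time, a matrix $H\in\mathbb F_q^{(n-k)\times n}$ in which every $d-1$ of its columns are $\mathbb F_q$-linearly independent; the code with parity-check matrix $H$ then has dimension $k$ and minimum distance at least $d$. Suppose $j<n$ columns have been chosen with this property. The number of vectors of $\mathbb F_q^{n-k}$ that are $\mathbb F_q$-linear combinations of at most $d-2$ of those columns is at most $\sum_{i=0}^{d-2}\binom{j}{i}(q-1)^{i}$ (count the pairs consisting of a subset of size $\le d-2$ together with a nonzero coefficient on each of its elements), hence at most $\sum_{i=0}^{d-2}\binom{n-1}{i}(q-1)^{i}<q^{\,n-k}$; so some vector of $\mathbb F_q^{n-k}$ avoids this set and may be adjoined as the next column without destroying the property. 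Iterating up to $n$ columns produces $H$. (A random-coding argument---bounding the expected number of nonzero codewords of weight $<d$ of a uniformly random $k\times n$ generator matrix by $(q^{k}-1)\bigl(\sum_{i=0}^{d-1}\binom{n}{i}(q-1)^{i}\bigr)q^{-n}$---yields the same existence statement and could be used instead.)

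\emph{Step 2 (volume estimate).} Next I would show that, for $0\le\delta<\frac{q-1}{q}$,
\[
\lim_{n\to\infty}\frac1n\log_{q}\!\Bigl(\sum_{i=0}^{\lfloor\delta n\rfloor}\binom{n}{i}(q-1)^{i}\Bigr)=H_q(\delta).
\]
The lower bound follows by discarding all terms but $i=\lfloor\delta n\rfloor$ and applying Stirling's formula to $\binom{n}{\lfloor\delta n\rfloor}(q-1)^{\lfloor\delta n\rfloor}$. For the matching upper bound one observes that on this range the ratio of consecutive summands, $\binom{n}{i+1}(q-1)^{i+1}\big/\bigl(\binom{n}{i}(q-1)^{i}\bigr)=\frac{(n-i)(q-1)}{i+1}$, is $\ge 1$ for every $i\le\lfloor\delta n\rfloor$ once $n$ is large (here one uses $\delta<\frac{q-1}{q}$), so the whole sum is at most $(n+1)$ times its last term; the factor $n+1$ is sub-exponential and vanishes after dividing by $n$ and taking $\log_{q}$. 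This is the step carrying the real content.

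\emph{Step 3 (passage to the limit).} Fix $\delta\in(0,\frac{q-1}{q})$ and let $R$ be any real number with $0\le R<1-H_q(\delta)$. For each large $n$ put $d_n=\lceil\delta n\rceil$ and $k_n=\lfloor Rn\rfloor$. By Step 2, $\frac1n\log_{q}\bigl(\sum_{i=0}^{d_n-2}\binom{n-1}{i}(q-1)^{i}\bigr)\to H_q(\delta)<1-R$, so the inequality of Step 1 holds for all $n$ sufficiently large and yields linear codes $\mathcal C_n$ of length $n$, dimension $k_n$, and minimum distance at least $d_n$. These realise rate $R$ together with relative distance at least $\delta$; the standard and elementary normalisation that converts such a family into one whose relative distance converges to exactly $\delta$---using only the monotonicity of $\alpha_q^{lin}$ in $\delta$, equivalently the fact that $U_q^{lin}$ may be taken downward-closed in its first coordinate---then shows $(\delta,R)\in U_q^{lin}$, whence $\alpha_q^{lin}(\delta)\ge R$. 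Letting $R\uparrow 1-H_q(\delta)$ gives $\alpha_q^{lin}(\delta)\ge 1-H_q(\delta)$ for $0<\delta<\frac{q-1}{q}$, as claimed. The main obstacle is Step 2---controlling the polynomial factors in the ball-volume asymptotics uniformly in $n$; Steps 1 and 3 are entirely standard.
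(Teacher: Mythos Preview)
The paper does not prove this proposition at all: it is stated as the classical asymptotic Gilbert--Varshamov bound and simply attributed to \cite{MS77}. Your proposal is a correct and standard derivation of that classical result; the greedy parity-check construction, the ball-volume asymptotics via Stirling and the dominant-term bound, and the passage to the limit are all valid as sketched. The only place one might want a bit more care is the ``normalisation'' in Step~3 (turning relative distance $\ge\delta$ into a limit equal to $\delta$); this does follow from monotonicity of $\alpha_q^{lin}$, which in turn can be established by puncturing, but that deserves a line of justification rather than an appeal to $U_q^{lin}$ being downward-closed.
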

The following TVZ bound, which is better than Varshamov-Gilbert bound, was established in \cite{TVZ82} by using algebraic-geometry codes.
\begin{proposition}\label{TVZ}
Let $q$ be a prime power. Then,
\begin{align*}
\alpha_q^{lin}(\delta)\ge 1-\delta-\frac{1}{A(q)}, \text{ for } \delta\in [0,1],
\end{align*}
where $A(q) = \limsup_{g\rightarrow \infty} \frac{N_q(g)}{g}$ and $N_q(g)$ denotes the maximum number of rational places that a global function field
of genus $g$ with full constant field $\mathbb F_q$ can have.
\end{proposition}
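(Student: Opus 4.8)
The plan is to prove the inequality by exhibiting suitable algebraic geometry (Goppa) codes, which is the route taken in \cite{TVZ82}. Fix a global function field $F/\mathbb{F}_q$ of genus $g$ with full constant field $\mathbb{F}_q$ and with $N$ rational places. Reserve one rational place $Q$ and let $D=P_1+\cdots+P_n$ be the sum of the remaining $n=N-1$ rational places. For an integer $m$ with $0\le m\le n-1$, consider the functional Goppa code
\begin{align*}
C_{\mathcal L}(D,mQ)=\bigl\{\,(f(P_1),\dots,f(P_n)):f\in\mathcal L(mQ)\,\bigr\}\subseteq\mathbb{F}_q^{\,n}.
\end{align*}
Because $\deg(mQ-D)=m-n<0$ one has $\mathcal L(mQ-D)=\{0\}$, so its dimension equals $k=\ell(mQ)-\ell(mQ-D)=\ell(mQ)\ge m+1-g$ by the Riemann--Roch theorem; and its minimum distance satisfies $d\ge n-\deg(mQ)=n-m$, since a nonzero $f\in\mathcal L(mQ)$ vanishes at no more than $m$ of the $P_j$. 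Adding the two estimates gives the key inequality
\begin{align*}
\frac{k}{n}+\frac{d}{n}\ \ge\ 1+\frac{1}{n}-\frac{g}{n}.
\end{align*}

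Next I would let the genus tend to infinity along a sequence of function fields realizing $A(q)$. If $1-\delta-\tfrac1{A(q)}\le 0$ there is nothing to prove since $\alpha_q^{lin}\ge 0$, so assume $1-\delta-\tfrac1{A(q)}>0$. By the definition of $A(q)=\limsup_{g\to\infty}N_q(g)/g$ there is a sequence of function fields $F_i/\mathbb{F}_q$ of genus $g_i\to\infty$, each with $N_i=N_q(g_i)$ rational places, such that $N_i/g_i\to A(q)$; then $n_i:=N_i-1\to\infty$ and $g_i/n_i\to 1/A(q)$. Pick integers $m_i$ with $0\le m_i\le n_i-1$ and $m_i/n_i\to 1-\delta$, and let $\mathcal C_i\subseteq\mathbb{F}_q^{\,n_i}$ be the code constructed above inside $F_i$, say with parameters $[n_i,k_i,d_i]$. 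Then $d_i/n_i\ge (n_i-m_i)/n_i\to\delta$ and $k_i/n_i\ge (m_i+1-g_i)/n_i\to (1-\delta)-\tfrac1{A(q)}$.

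Passing to a subsequence I may assume $d_i/n_i\to\delta^{*}$ and $k_i/n_i\to R^{*}$, where $\delta^{*}\ge\delta$ and $R^{*}\ge 1-\delta-\tfrac1{A(q)}$; by construction $(\delta^{*},R^{*})\in U_q^{lin}$, so $\alpha_q^{lin}(\delta^{*})\ge R^{*}$. Using the standard fact that $\alpha_q^{lin}$ is non-increasing in $\delta$, together with $\delta\le\delta^{*}$, we obtain
\begin{align*}
\alpha_q^{lin}(\delta)\ \ge\ \alpha_q^{lin}(\delta^{*})\ \ge\ R^{*}\ \ge\ 1-\delta-\frac{1}{A(q)}.
\end{align*}
One can also avoid the monotonicity lemma: puncture each $\mathcal C_i$ at $s_i$ coordinates lying in the support of a fixed minimum-weight codeword, which leaves the dimension unchanged (as $s_i<d_i$) and makes the minimum distance exactly $d_i-s_i$; choosing $s_i$ so that $(d_i-s_i)/(n_i-s_i)\to\delta$ then yields a sequence directly witnessing $\alpha_q^{lin}(\delta)\ge 1-\delta-\tfrac1{A(q)}$.

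The only places where real work hides are the two classical facts about Goppa codes used above — the Riemann--Roch lower bound $\ell(mQ)\ge m+1-g$ on the dimension and the bound $d\ge n-\deg G$ on the minimum distance — which together force the achievable pairs $(\delta,R)$ to lie on or above the line $R=1-\delta-1/A(q)$ in the limit. Exhibiting a function-field sequence realizing the $\limsup$ that defines $A(q)$ is immediate from the definition; the genuinely deep ingredient behind the power of this bound — namely that for $q$ a square $A(q)=\sqrt q-1$, via modular or Drinfeld modular curves, so that the bound beats Gilbert--Varshamov for large $q$ — is external and not needed for the statement as phrased here.
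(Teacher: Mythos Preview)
The paper does not give its own proof of this proposition; it is simply quoted as a known result from \cite{TVZ82}, with no argument supplied. Your proof is correct and is precisely the classical argument behind the Tsfasman--Vl\u{a}dut--Zink bound: build one-point Goppa codes $C_{\mathcal L}(D,mQ)$, use Riemann--Roch and the designed-distance bound to get $k+d\ge n+1-g$, and then pass to a tower of function fields with $N_i/g_i\to A(q)$. The handling of the limit (subsequence plus monotonicity of $\alpha_q^{lin}$, or alternatively puncturing to hit $\delta$ exactly) is also fine. So there is no discrepancy to report --- you have supplied a valid proof where the paper merely cites one.
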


It was proved that LCD codes can also attain the  asymptotical GV bound in \cite{Sen04}, i.e. $\alpha_q^{E}(\delta)\ge R_{GV}(\delta)$.

Very recently, Jin and Xing \cite{JX17} proved that $\alpha_q^{E}(\delta)$ has bound better than the GV bound in some special cases by using algebraic-geometry codes. They proved the following result.
\begin{proposition}
When $q\ge 128$ is a power of $2$, then $\alpha_q^{E}(\delta)$
exceeds the asymptotic Gilbert-Varshamov bound in two intervals of $[0,1]$.
\end{proposition}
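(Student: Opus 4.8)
The plan is to combine an algebraic geometry lower bound on $\alpha_q^{lin}(\delta)$ with the equivalence of arbitrary linear codes to Euclidean LCD codes. For $q=2^m$ with $m\ge7$ one has $q>3$, and over such a field every $[n,k,d]$ linear code is equivalent to an Euclidean LCD code with the same parameters (the main result of this paper; for algebraic geometry codes this is already in \cite{JX17}); hence $\alpha_q^{E}(\delta)=\alpha_q^{lin}(\delta)$, and the statement reduces to exhibiting two disjoint sub-intervals of $[0,1]$ on which $\alpha_q^{lin}(\delta)>R_{GV}(\delta)=1-H_q(\delta)$.

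The first step is the equivalence. Let $G$ be a generator matrix of an $[n,k,d]$ code $\mathcal C$ over $\mathbb F_q$, and rescale the $i$-th coordinate by $\lambda_i\in\mathbb F_q^\times$, so the new generator matrix is $GD$ with $D=\mathrm{diag}(\lambda_1,\dots,\lambda_n)$. By Proposition~\ref{LDC}, the rescaled code is LCD if and only if $\det\!\left(G\,\mathrm{diag}(\nu_1,\dots,\nu_n)\,G^{T}\right)\neq0$ with $\nu_i=\lambda_i^{2}$. By the Cauchy--Binet formula this determinant equals $\sum_{\#S=k}(\det G_S)^{2}\prod_{i\in S}\nu_i$, where $G_S$ is the $k\times k$ submatrix of $G$ on the columns of $S$; it is a polynomial of degree at most $1$ in each $\nu_i$ whose coefficients $(\det G_S)^{2}$ are not all zero, since $\text{Rank}(G)=k$ (this is essentially the matrix fact of Section~\ref{det-matrix}). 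In characteristic $2$ every element of $\mathbb F_q$ is a square, so the $\nu_i$ range over all of $\mathbb F_q^\times$, a grid of side $q-1\ge2$; as a nonzero polynomial of degree at most $1$ in each variable cannot vanish on such a grid, some choice of the $\nu_i$ makes the determinant nonzero, yielding an equivalent LCD code. Rescaling coordinates changes neither the length, the dimension, nor the Hamming weights, so $[n,k,d]$ is preserved. (This is where $q>3$ enters: in odd characteristic the $\nu_i$ would be confined to the $\tfrac{q-1}{2}$ nonzero squares, forcing $q\ge5$, and together with the even case this gives exactly $q>3$.)

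For the second step, fix an asymptotically good tower of function fields over $\mathbb F_{2^m}$: for $m$ even, $2^m$ is a square and the Garcia--Stichtenoth tower attains $A(q)=\sqrt q-1$; for $m$ odd, the Bassa--Beelen--Garcia--Stichtenoth construction gives an explicit lower bound $A(q)\ge A_0(q)>0$. Applying Proposition~\ref{TVZ} to the algebraic geometry codes of such a tower and replacing each by the equivalent LCD code from Step~1, one gets $\alpha_q^{E}(\delta)=\alpha_q^{lin}(\delta)\ge1-\delta-1/A(q)$ on $[0,1]$. It then remains to compare this line --- together with the refinements of the Tsfasman--Vladut--Zink bound near the ends of $[0,1]$ coming from the dual (differential) algebraic geometry construction and from Xing-type improvements --- with the strictly concave curve $1-H_q(\delta)$. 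Since $A(2^m)$ is of order $2^{m/2}$, and for $m\ge7$ exceeds the reciprocal of $\max_\delta\left(H_q(\delta)-\delta\right)$, the line overtakes $1-H_q(\delta)$ on an interval around $\delta^*=\frac{q-1}{2q-1}$, and the endpoint refinement contributes a second, disjoint interval. I expect this last step to be the main obstacle: one must secure a strong enough lower bound on $A(2^m)$ in the non-square case $m$ odd, and then carry out the analytic comparison with $1-H_q(\delta)$, whose conclusion --- two intervals rather than one --- is sensitive both to the precise value of $A(2^m)$ and to which refinements of the algebraic geometry bound one uses.
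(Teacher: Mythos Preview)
This proposition is not proved in the paper: it appears in Section~\ref{Pre} as a quotation of Jin and Xing \cite{JX17}, cited as prior work motivating the paper's own stronger result. There is no ``paper's own proof'' to compare your sketch against. The paper's contribution (the corollary following Theorem~\ref{LCD-Euclidean}) is the identity $\alpha_q^{E}(\delta)=\alpha_q^{lin}(\delta)$ for all $q>3$, which is what you are effectively re-deriving in Step~1 and then feeding back into the cited statement; so your argument reverses the paper's logical order rather than reproducing anything in it.

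On the content of your sketch: Step~1 is a correct alternative proof of the equivalence, via Cauchy--Binet and a combinatorial-nullstellensatz-style nonvanishing on the grid $(\mathbb F_q^\times)^n$, in place of the paper's inductive Lemma~\ref{matrix}. Both routes yield the same conclusion for $q>3$.

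Step~2, however, has a genuine gap. The curve $1-H_q(\delta)$ is strictly convex on $(0,\tfrac{q-1}{q})$, so the straight line $1-\delta-1/A(q)$ can lie above it on at most a \emph{single} interval (a line meets a strictly convex curve in at most two points, and lies above only between them). Thus the bare TVZ bound combined with $\alpha_q^{E}=\alpha_q^{lin}$ yields one interval, not two. Your appeal to ``endpoint refinements'' and ``Xing-type improvements'' to produce a second, disjoint interval names no specific bound, states no inequality, and verifies nothing for $q\ge 128$; you yourself flag this as the main obstacle, and the sketch does not overcome it. Since the two-interval conclusion is precisely the content of the proposition beyond ``TVZ beats GV somewhere,'' the proof as written is incomplete at exactly the point that matters.
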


In this paper, we will prove that for any prime power $q$ with $q>3$ and $\delta \in [0,1]$, $\alpha_q^{E}(\delta)=\alpha_{q}^{lin}(\delta)$ holds. Hence, any lower bound of
$\alpha_{q}^{lin}(\delta)$ is also a lower bound of $\alpha_{q}^{E}(\delta)$.

For any $\mathbf a=(a_1,\cdots, a_n) \in \mathbb F_q^{n}$ and permutation $\sigma$ of $\{1,2,\cdots, n\}$, we define $\mathcal C_{\mathbf a}$ and $\sigma(\mathcal C)$ as the following linear codes
\begin{align*}
\mathcal C_{\mathbf a}=\{(a_1c_1,\cdots, a_n c_n): (c_1, \cdots, c_n) \in \mathcal C\},
\end{align*}
and
\begin{align*}
\sigma(\mathcal C)=\{(c_{\sigma(1)},\cdots, c_{\sigma(n)}): (c_1, \cdots, c_n) \in \mathcal C\}.
\end{align*}

Two codes $\mathcal C$ and $\mathcal C'$ in $\mathbb F_q^n$ are called \emph{equivalent} if $\mathcal C'=\sigma(\mathcal C_{\mathbf a})$ for some permutation $\sigma$ of $\{1,2,\cdots, n\}$ and $\mathbf a \in \mathbb (\mathbb F_q^*)^n$.  Any $[n,k]$ linear code over a field is equivalent to a code generated by a matrix of the form $[I_k: P]$ where $I_k$ denotes the $k\times k$ identity matrix.

\section{Some results on matrices}\label{det-matrix}

Let $l$ and  $j$ be two integers with $0\le j\le l$ and $l\ge 1$. Let $M$ be an $l\times l$ matrix over $\mathbb F_q$. Let ${\bf u}$ be a word in $\mathbb F_q^l$ of Hamming weight $j$ and $I=\{i_1,\cdots, i_j\}$ its support. Denote by $diag_{\, l} [{\bf u}]$ the diagonal $l\times l$ matrix whose elements on the diagonal are $u_1,\dots ,u_l$. Define $M_{I}$ the submatrix of $M$ obtained by deleting the $i_1,\cdots, i_j$-th rows and columns of $M$. Denote $M_{I}=1$ if $I=\{1,2, \cdots, l\}$ and $M_{\emptyset}=M$.

\begin{lemma}\label{matrix}
Let $M$ be an  $l\times l$ matrix over $\mathbb F_q$ and $t$ an integer with $0\le t\le l-1$.
Suppose that $\det(M_I)=0$ holds for any  subset  $I$ of $\{1,2,\cdots, l\}$ with $0\le \# I \le t$.
% (i) If $t\ge 1$, then, for any $1\le j \le t $, $1\le i_1<i_2<\cdots<i_j\le l$, and $\mu_1,\mu_1,\cdots, \mu_j\in \mathbb F_q$,
%\[\det(M\left (\left [\begin{matrix}  \mu_1, & \mu_2, & \cdots , & \mu_j \\ i_1, & i_2, &\cdots , & i_j \end{matrix}\right ]\right))=0;\]
Then,  for any $1\le j \le t+1$ and every word ${\bf u}$ of Hamming weight $j$, denoting its support by $J$, we have:
\begin{align}\label{MM}
\det(M+diag_{\, l} [{\bf u}])=
{\Big (}\prod_{i\in J} u_i{\Big )} \det(M_J).
\end{align}
\end{lemma}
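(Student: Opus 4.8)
The plan is to reduce the statement to one multilinear expansion of the determinant, after which the hypothesis does all the remaining work.

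First I would record an elementary identity valid for \emph{every} $l\times l$ matrix $M$ over $\mathbb F_q$ and \emph{every} diagonal matrix $D=diag_{\,l}[{\bf u}]$ with diagonal entries $u_1,\dots,u_l$:
\[
\det(M+D)=\sum_{S\subseteq\{1,\dots,l\}}\Big(\prod_{i\in S}u_i\Big)\det(M_S),
\]
where, exactly as in the statement, $M_S$ is the matrix obtained from $M$ by deleting the rows and columns indexed by $S$ (with $M_\emptyset=M$ and $M_{\{1,\dots,l\}}=1$). To establish it I would start from the Leibniz formula
\[
\det(M+D)=\sum_{\pi}\operatorname{sgn}(\pi)\prod_{i=1}^{l}\big(M_{i,\pi(i)}+u_i\,[\pi(i)=i]\big),
\]
with $\pi$ ranging over all permutations of $\{1,\dots,l\}$, and then expand each product. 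Choosing the summand $u_i$ at the indices belonging to a subset $S$ forces $\pi$ to fix every element of $S$, while the remaining positions contribute $\prod_{i\notin S}M_{i,\pi(i)}$; since the permutations of $\{1,\dots,l\}$ fixing $S$ pointwise are precisely the permutations of $\{1,\dots,l\}\setminus S$, and the signature restricts in the expected way, summing over such $\pi$ yields exactly $\det(M_S)$. This route requires no sign bookkeeping; alternatively the identity follows by an easy induction on $l$ using Laplace expansion along a column of $D$.

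Next I would specialize to a word ${\bf u}$ of Hamming weight $j$ with support $J$, where $1\le j\le t+1$. Since $u_i=0$ for every $i\notin J$, all terms with $S\not\subseteq J$ disappear, leaving
\[
\det\!\big(M+diag_{\,l}[{\bf u}]\big)=\sum_{S\subseteq J}\Big(\prod_{i\in S}u_i\Big)\det(M_S).
\]
For any proper subset $S\subsetneq J$ one has $\#S\le j-1\le t$, so the hypothesis $\det(M_I)=0$ for all $I$ with $0\le\#I\le t$ annihilates the corresponding term. Hence the sum collapses to its $S=J$ term, which is $\big(\prod_{i\in J}u_i\big)\det(M_J)$, and this is precisely \eqref{MM}; in the extreme case $j=t+1=l$ the identity is consistent with the convention $M_{\{1,\dots,l\}}=1$.

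There is essentially no hard step: the only point that needs care is phrasing the multilinear expansion precisely with the paper's deletion convention for $M_I$, and the Leibniz-formula derivation above keeps that bookkeeping transparent. The rest is a one-line application of the vanishing hypothesis.
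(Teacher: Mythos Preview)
Your argument is correct. The multilinear expansion
\[
\det(M+D)=\sum_{S\subseteq\{1,\dots,l\}}\Big(\prod_{i\in S}u_i\Big)\det(M_S)
\]
is valid (your Leibniz derivation is fine), and once it is in hand the vanishing hypothesis immediately kills every term with $S\subsetneq J$, leaving only the $S=J$ term. The edge cases are also handled: when $j\le t$ even the $S=J$ term vanishes, and both sides of \eqref{MM} are zero.

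Your route differs from the paper's. The paper proves the lemma by induction on $t$: for the base case $t=0$ it expands $\det(M+diag_{\,l}[{\bf u}])$ along the single row/column touched by a weight-$1$ word; for the inductive step it peels off one coordinate of the support, writing $\det(M+diag_{\,l}[{\bf u}])=\det(M+diag_{\,l}[{\bf u}'])+u_{i_{s+2}}\det(M_{\{i_{s+2}\}}+diag_{\,l-1}[{\bf u}''])$ and applying the induction hypothesis twice. What you gain by writing down the full expansion at once is a one-shot proof with no induction to manage; what the paper's approach gains is that it never needs the general identity, only the trivial one-coordinate cofactor expansion at each step. Both are equally elementary; yours is shorter and makes the structure (a polynomial in the $u_i$ whose coefficients are exactly the principal-minor determinants $\det(M_S)$) explicit from the start.
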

\begin{proof}
We will prove this statement by induction on $t$. We first prove that the statement holds if $t=0$. In this case, $\det (M)=0$. Then, for any ${\bf u}$ of Hamming weight 1, denoting by $i_1$ the position of its only nonzero coordinate, we have:
\begin{align*}
\det(M+diag_{\, l} [{\bf u}])=\det(M)+u_{i_1} \det(M_{\{i_1\}})=u_{i_1} \det(M_{\{i_1\}}).
\end{align*}
Thus, the statement holds if $t=0$. In the following, suppose the statement holds for $t=0,1,2,\cdots, s$. We will prove that the statement holds for $t=s+1$.
We only need to check the property for a word ${\bf u}$ of Hamming weight $s+2$. Let the support of $\mathbf u$ be $J=\{i_1,\cdots, i_{s+2}\}$. Note that if ${\bf u}'$ is the word obtained from ${\bf u}$ by changing $u_{i_{s+2}}$ into 0 and $\mathbf u''$ is the word by deleting the $i_{s+2}$ component of $\bf u$, we have:
\begin{align*}
\det(M+diag_{\, l} [{\bf u}'])=0
\end{align*}
and
\begin{align*}
\det(M+diag_{\, l} [{\bf u}])=&
\det(M+diag_{\, l} [{\bf u}'])+u_{i_{s+2}} \det(M_{\{i_{s+2}\}}+diag_{\, l-1} [{\bf u}''])\\
=&u_{i_{s+2}} \det(M_{\{i_{s+2}\}}+diag_{\, l-1} [{\bf u}''])\\
=& u_{i_1} u_{i_2} \cdots u_{i_{s+2}} \det(M_{\{i_1,  i_2, \cdots ,  i_{s+2}\}}).
\end{align*}
Thus, the Equation (\ref{MM}) holds for $j=s+2$. When $1\le j\le s+1$, the Equation (\ref{MM}) holds from the inductive assumption.\\
This completes the proof.
\end{proof}

For any matrix $M$, $\text{Row}(M)$ (resp. $\text{Col}(M)$) denotes the vector space spanned by the rows (resp. columns) of $M$. Then,
$\text{dim}(\text{Row}(M))= \text{dim}(\text{Col}(M))$ and denote it by \text{Rank}(M), which is called the rank of $M$. For $\mathbf v^i\in \mathbb F^k$ ($i \in \{1,2, \cdots, n\}$),
$\text{Span}\{\mathbf v^1, \cdots, \mathbf v^n\}$ denotes the vector space spanned by $\mathbf v^1, \cdots, \mathbf v^n$.

%\begin{lemma}\label{rank-M1+M2}
%Given two matrices $M_1$ and $M_2$, one has $\text{Rank}(M_1+M_2)\leq \text{Rank}(M_1)+\text{Rank}(M_2)$.
%\end{lemma}
%\begin{proof}
%This corollary follows from $\text{Row}(M_1+M_2) \subseteq  \text{Row}(M_1)+\text{Row}(M_2)$ and the definition of rank.
%\end{proof}

\begin{lemma}\label{rank-GG}
Given a matrix $G$ with $k$ rows and $n$ columns, one has $\text{Rank}(GG^T)\leq \text{Rank}(G)$.
\end{lemma}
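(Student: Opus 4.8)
The plan is to use only the elementary fact that the column space of a matrix product is contained in the column space of the left factor, together with the identity (recalled just before the lemma) that transposition preserves rank. Since $\text{Rank}(G^T)=\dim\text{Col}(G^T)=\dim\text{Row}(G)=\dim\text{Col}(G)=\text{Rank}(G)$, it suffices to show $\text{Rank}(GG^T)\le\text{Rank}(G)$.

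To see this, write $G^T=[\mathbf v^1\mid\cdots\mid\mathbf v^k]$ in terms of its $k$ columns, where $\mathbf v^j\in\mathbb F_q^n$. Then the $j$-th column of $GG^T$ is exactly $G\mathbf v^j$, which is a linear combination of the columns of $G$; hence every column of $GG^T$ lies in $\text{Col}(G)$, i.e. $\text{Col}(GG^T)\subseteq\text{Col}(G)$. Taking dimensions yields $\text{Rank}(GG^T)=\dim\text{Col}(GG^T)\le\dim\text{Col}(G)=\text{Rank}(G)$, which is the desired inequality.

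There is no genuine obstacle here: the statement is a routine instance of the submultiplicativity of rank, $\text{Rank}(AB)\le\min\{\text{Rank}(A),\text{Rank}(B)\}$, applied to $A=G$ and $B=G^T$, and one could equally well cite that directly. The only point requiring a moment's care is keeping the inclusion $\text{Col}(GG^T)\subseteq\text{Col}(G)$ oriented correctly and invoking $\text{Rank}(G^T)=\text{Rank}(G)$. The identical argument, with $\overline{G}^T$ in place of $G^T$ and using $\text{Rank}(\overline G)=\text{Rank}(G)$, gives the analogous bound $\text{Rank}(G\overline G^T)\le\text{Rank}(G)$ needed for the Hermitian setting of Proposition \ref{LDC}.
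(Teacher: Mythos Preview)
Your argument is correct and follows essentially the same route as the paper: both establish the inclusion $\text{Col}(GG^T)\subseteq\text{Col}(G)$ and then take dimensions, the paper via the outer-product decomposition $GG^T=\sum_i \mathbf g^i(\mathbf g^i)^T$ over the columns $\mathbf g^i$ of $G$, and you via the observation that each column of $GG^T$ is $G$ times a vector. Your opening reduction involving $\text{Rank}(G^T)=\text{Rank}(G)$ is unnecessary, since the inequality you then prove is already the stated lemma, but this is harmless.
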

\begin{proof}
Let $\mathbf{g}^1, \cdots, \mathbf{g}^n$ be the columns vectors of $G$. Then $GG^{T}=\sum_{i=1}^{n} \mathbf g^i (\mathbf g^{i})^{T}$. By
$\text{Col}(\mathbf g^i (\mathbf g^{i})^{T}) =\text{Span}\{\mathbf g^i\}$, $\text{Col}(GG^{T}) \subseteq \text{Span}\{\mathbf g^i : i\in \{1, \cdots, n\}\}$. This corollary follows
from  the rank of matrix.
\end{proof}

\section{Construction of LCD codes from any linear code}\label{LCD-code}
\subsection{Construction of LCD codes which are equivalent to the original one}
Let $\mathcal C$ be an $[n,k,d]$-linear code over $\mathbb F_q$ with generator matrix $G=[I_k: P]$. For any $\mathbf a=(a_1,\cdots, a_n) \in \mathbb F_q^{n}$, we define $\mathcal C_{\mathbf a}$ as the following linear code
\begin{align*}
\mathcal C_{\mathbf a}=\{(a_1c_1,\cdots, a_n c_n): (c_1, \cdots, c_n) \in \mathcal C\}.
\end{align*}

%\begin{lemma}
%Let $G$ be a $k\times n$ matrix. Set  $s$ be the rank of $G$. Then the rank of $GG^T$ is less %than or equal to $s$.
%\end{lemma}

\begin{theorem}\label{LCD-Euclidean}
Let $q$ be a power of a prime  and let $\mathcal C$ be an $[n,k,d]$-linear code over $\mathbb F_q$ with generator matrix $G=[I_k: P]$. Let $M=GG^{T}$.
Let $t\leq k-1$ be a non-negative integer such that $\det(M_I)=0$ for any subset of $\{1,2, \cdots, k\}$ with $0 \le \# I \le t$ and suppose there exists $J\subseteq \{1,\dots ,k\}$ of size $t+1$ such that
$\det(M_J)\neq 0$. Let ${\bf a}$ be any word of length $n$ such that $a_j\in \mathbb F_q\setminus \{1, -1\}$ for $j\in J$ and $a_j\in \{1, -1\}$ for $j\in \{1,\dots ,n\}\setminus J$, then $\mathcal C_{\mathbf a}$
is an Euclidean complementary dual  $[n,k]$-linear code. Furthermore, if
$a_j\neq 0$,  then $\mathcal C_{\mathbf a}$
is an Euclidean complementary dual  $[n,k,d]$-linear code.
\end{theorem}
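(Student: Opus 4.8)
The plan is to apply Proposition \ref{LDC} to the obvious spanning matrix $G' := G\,diag_{\,n}[\mathbf a]$ of $\mathcal C_{\mathbf a}$, and to extract both the dimension and the minimum distance from a single determinant computation. Writing $D_1 = diag_{\,k}[a_1,\dots,a_k]$ and $D_2 = diag_{\,n-k}[a_{k+1},\dots,a_n]$ and using $G = [I_k:P]$, a one-line block multiplication gives $G'(G')^{T} = D_1^{2} + P D_2^{2} P^{T}$. First I would note that, since $J\subseteq\{1,\dots,k\}$, every coordinate of $\mathbf a$ in a position $>k$ equals $\pm 1$, so $D_2^{2} = I_{n-k}$; likewise $a_j^{2} = 1$ for $j\in\{1,\dots,k\}\setminus J$. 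Hence $D_1^{2} = I_k + diag_{\,k}[\mathbf u]$, where $u_j := a_j^{2}-1$ for $j\in J$ and $u_j := 0$ otherwise, and therefore, since $M = GG^{T} = I_k + PP^{T}$,
$$G'(G')^{T} = I_k + diag_{\,k}[\mathbf u] + PP^{T} = M + diag_{\,k}[\mathbf u].$$
The word $\mathbf u$ has support exactly $J$, i.e.\ Hamming weight $t+1$, and each $u_j$ with $j\in J$ is nonzero, because $a_j^{2}-1 = (a_j-1)(a_j+1)$ and $a_j\notin\{1,-1\}$ (this works in every characteristic, $-1$ possibly coinciding with $1$).

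The second step is to feed this identity into Lemma \ref{matrix} with $l = k$ and $j = t+1$: the hypotheses there, namely $\det(M_I)=0$ for all $I$ with $\#I\le t$ and $1\le j\le t+1$, are exactly what is available, so
$$\det\big(G'(G')^{T}\big) = \det\big(M + diag_{\,k}[\mathbf u]\big) = \Big(\prod_{i\in J}u_i\Big)\det(M_J)\neq 0.$$
Next, Lemma \ref{rank-GG} applied to $G'$ gives $\text{Rank}(G'(G')^{T})\le \text{Rank}(G')$, and since the columns of $G'$ are scalar multiples of those of $G$ we also have $\text{Rank}(G')\le \text{Rank}(G)=k$; the nonvanishing of the determinant then forces $\text{Rank}(G') = k$. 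Thus $G'$ is genuinely a generator matrix of $\mathcal C_{\mathbf a}$, so $\mathcal C_{\mathbf a}$ has dimension $k$, and Proposition \ref{LDC} shows that it is an Euclidean LCD code. This settles the $[n,k]$ assertion.

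For the refinement, suppose in addition $a_j\neq 0$ for $j\in J$ (for $j\notin J$ we already have $a_j=\pm1\neq0$, even in characteristic $2$). Then $diag_{\,n}[\mathbf a]$ is invertible, so $\mathbf c\mapsto\mathbf c\,diag_{\,n}[\mathbf a]$ is a monomial bijection of $\mathbb F_q^{n}$ carrying $\mathcal C$ onto $\mathcal C_{\mathbf a}$ and preserving Hamming weights; hence the minimum distance is unchanged and $\mathcal C_{\mathbf a}$ is an $[n,k,d]$ code. I do not foresee a genuine obstacle here: the only point needing care is spotting the identity $G'(G')^{T} = M + diag_{\,k}[\mathbf u]$ and verifying that the perturbation word is supported exactly on $J$ with all entries there nonzero, after which Lemmas \ref{matrix} and \ref{rank-GG} do the rest. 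One should also record the implicit existence condition on which the last statement tacitly relies, namely that $\mathbb F_q\setminus\{0,1,-1\}$ be nonempty, i.e.\ $q>3$.
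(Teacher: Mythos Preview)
Your proof is correct and follows essentially the same route as the paper: identify $G'(G')^{T}=M+diag_{\,k}[\mathbf u]$, invoke Lemma~\ref{matrix} to obtain $\det(G'(G')^{T})=(\prod_{j\in J}u_j)\det(M_J)\neq 0$, and conclude via Proposition~\ref{LDC}. Your version is in fact more careful than the paper's, which tacitly assumes $G'$ has full row rank and omits the minimum-distance verification; your use of Lemma~\ref{rank-GG} to recover $\text{Rank}(G')=k$ from the nonvanishing determinant neatly fills that gap (relevant since $a_j=0$ is permitted for $j\in J$ in the first assertion).
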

\begin{proof}
Let $G'$ be the generator matrix of $\mathcal C_{\mathbf a}$ obtained from $G$ by multiplying its $j$-th column by $a_j$ for $j\in \{1,2,\cdots n\}$ and let ${\bf u}$ be the word of length $n$ and support $J$ such that $u_j=a_j^2-1$ for $1\le j \le k$.
Then, $G'G'^T=M+diag_{\, k}[{\bf u}]$. From Lemma \ref{matrix},
\begin{align*}
\det(G'G'^T)=&\det(M+diag_{\, k}[{\bf u}])\\
=& {\Big (}\prod_{j\in J}u_j{\Big )}\det(M_J) \neq 0.\\
%\neq &0.
\end{align*}
By Proposition \ref{LDC}, $\mathcal C_{\mathbf a}$ is an Euclidean LCD code.
\end{proof}

\begin{proposition}
Let $q$ be a power of a prime  and let $\mathcal C$ be an $[n,k,d]$-linear code over $\mathbb F_q$ with generator matrix $G=[I_k: P]$. Let $M=GG^{T}$.
Let $t\leq k-1$ be a non-negative integer such that $\det(M_I)=0$ for any subset $I$ of $\{1,2, \cdots, k\}$ with $0 \le \# I \le t$ and suppose there exists $J\subseteq \{1,\dots ,k\}$ of size $t+1$ such that
$\det(M_J)\neq 0$. Then, $t\ge h_{E}(\mathcal C)-1$.
\end{proposition}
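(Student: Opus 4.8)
The plan is to relate the quantity $t$ to the dimension of the Euclidean hull of $\mathcal C$ through the rank of the matrix $M = GG^T$. Recall that since $G = [I_k : P]$, the code $\mathcal C$ has dimension exactly $k$, and by Proposition \ref{LDC} the hull of $\mathcal C$ is trivial precisely when $M$ is nonsingular. More generally, a standard fact is that $h_E(\mathcal C) = k - \text{Rank}(GG^T)$: the hull $\mathcal C \cap \mathcal C^\perp$ consists of the codewords $\mathbf{x}G$ with $\mathbf{x} \in \mathbb F_q^k$ such that $\mathbf{x}GG^T = \mathbf{0}$, i.e. $\text{Hull}_E(\mathcal C)$ is the image under $G$ of the left kernel of $M$, and since $G$ has full row rank this image has dimension equal to $\dim\ker(M^T) = k - \text{Rank}(M)$. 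So the first step is to record this identity $h_E(\mathcal C) = k - \text{Rank}(M)$.

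Next I would extract from the hypotheses the lower bound $\text{Rank}(M) \ge t+1$. This is immediate: the hypothesis provides a subset $J \subseteq \{1,\dots,k\}$ of size $t+1$ with $\det(M_J) \ne 0$, but wait — here $M_J$ is the submatrix of $M$ obtained by \emph{deleting} the rows and columns indexed by $J$, which is a $(k - t - 1) \times (k - t - 1)$ matrix, not a $(t+1)\times(t+1)$ one. So the correct reading is that $M$ has a nonvanishing minor of order $k - t - 1$, hence $\text{Rank}(M) \ge k - t - 1$. Combining with the identity from the first step: $h_E(\mathcal C) = k - \text{Rank}(M) \le k - (k - t - 1) = t + 1$, which gives $t \ge h_E(\mathcal C) - 1$ directly.

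The main thing to get right — and the step I expect to require the most care — is the bookkeeping of which submatrix $M_J$ denotes and hence which order of minor is being asserted nonzero; the definition in Section \ref{det-matrix} deletes the indexed rows and columns, so a size-$(t+1)$ support yields a minor of order $k-t-1$. Once that is pinned down, I would also sanity-check the degenerate boundary case $t + 1 = k$ (so $J = \{1,\dots,k\}$ and $M_J = 1 \ne 0$ by convention), which forces no constraint beyond $\text{Rank}(M) \ge 0$ and correctly yields the trivial bound $h_E(\mathcal C) \le k$. If instead the intended reading of the hypothesis were that some $(t+1)\times(t+1)$ minor of $M$ is nonzero (which is the more natural hypothesis and what makes Theorem \ref{LCD-Euclidean} tick), then $\text{Rank}(M) \ge t+1$ and the conclusion $h_E(\mathcal C) = k - \text{Rank}(M) \le k - t - 1$ would be even stronger; but the statement as written only claims $t \ge h_E(\mathcal C) - 1$, so I would present the argument at the level of generality that yields exactly this inequality, using Lemma \ref{rank-GG} only if needed for an auxiliary bound. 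The proof is short: establish $h_E(\mathcal C) = k - \text{Rank}(GG^T)$, then bound $\text{Rank}(GG^T)$ from below using the nonvanishing minor guaranteed by the hypothesis, and conclude.
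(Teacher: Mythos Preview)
Your proof is correct, but it takes a different and more direct route than the paper's. You use the identity $h_E(\mathcal C)=k-\mathrm{Rank}(GG^T)$ (obtained by identifying the hull with the image under $G$ of the left kernel of $M$) together with the observation that $\det(M_J)\neq 0$ exhibits a nonvanishing principal minor of $M$ of order $k-t-1$, giving $\mathrm{Rank}(M)\ge k-t-1$ and hence $h_E(\mathcal C)\le t+1$. The paper instead argues coding-theoretically: it forms the subcode $\mathcal C_1\subseteq\mathcal C$ spanned by the rows of $G$ with indices outside $J$, notes that $G_1G_1^T=M_J$ so $\mathcal C_1$ is LCD, and then derives a contradiction from $t+1<h_E(\mathcal C)$ by a dimension count forcing $\mathcal C_1\cap\mathrm{Hull}_E(\mathcal C)\neq\{0\}$, which would place a nonzero vector in $\mathcal C_1\cap\mathcal C_1^{\perp}$. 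Your argument is shorter and purely linear-algebraic; the paper's argument avoids invoking the rank--hull identity and keeps everything phrased in terms of LCD subcodes, which matches the surrounding narrative but is a little more roundabout. Your careful reading of the convention for $M_J$ (rows and columns \emph{deleted}) is exactly right and is the one point where a careless reader could go astray.
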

\begin{proof}
Let $J\subseteq \{1,\dots ,k\}$ be any size $t+1$ such that
$\det(M_J)\neq 0$. Let $\mathcal C_1:=\text{Span}\{\mathbf b_j: j\in \{1,2,\cdots, n\} \setminus J\}$, where $\mathbf b_j$ is the $j$-row of $G$ and $G_1$ is obtained from $G$
by deleting the $j$-row of $G$ for $j \in J$. Then, $G_1$ is a generator matrix
of $\mathcal C_1$ and $\det (G_1G_1^{T})=\det (M_J)\neq 0$. Thus, $\mathcal C_1$ is an LCD code.\\
Suppose that $1\le t+1< h_{E}(\mathcal C)$. Then, $\text{dim}(\mathcal C_1)+\text{dim}(\text{Hull}_{E}(\mathcal C))=k-(t+1)+ h_{E}(\mathcal C)\ge k+1>\text{dim}(\mathcal C)$.
From $\mathcal C_1 \subseteq  \mathcal C$ and $\text{Hull}_{E}(\mathcal C) \subseteq \mathcal C$, $\mathcal C_1 \cap \text{Hull}_{E}(\mathcal C) \neq \{0\}$.
Choose any $\mathbf c \in \mathcal C_1 \cap \text{Hull}_{E}(\mathcal C) $. It's easy to observe that $\mathbf c \in \mathcal C_1 \cap \mathcal C_1^{\perp}$, which  contradicts with
$\mathcal C$ being LCD.
\end{proof}

The reader notices that the above theorem generalizes Theorem 3.3 of \cite{CMTQ17}.
\begin{corollary}
Let $q$ be a power of a prime with $q>3$ and $\mathcal C$ be an $[n,k,d]$-linear code over $\mathbb F_q$. Then, there exists $\mathbf a \in (a_1,\cdots, a_n) \in \mathbb F_q^n$ with
$a_j \neq 0$ for any $1 \le j \le n$ such that $\mathcal C_{\mathbf a}$ is an Euclidean LCD code.
\end{corollary}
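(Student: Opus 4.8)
The plan is to reduce the statement directly to Theorem~\ref{LCD-Euclidean}. First I would put $\mathcal C$ in systematic form: $\mathcal C$ is equivalent to a code $\mathcal D$ with generator matrix $G=[I_k:P]$, and since coordinate permutations preserve both the property of being Euclidean LCD (for a permutation matrix $P_\sigma$ one has $(G'P_\sigma)(G'P_\sigma)^T=G'G'^T$) and the scaling operation $(\cdot)_{\mathbf a}$ up to relabelling of $\mathbf a$, it suffices to produce an all-nonzero $\mathbf a$ with $\mathcal D_{\mathbf a}$ Euclidean LCD and then transport it back through the equivalence. Set $M=GG^T$.

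If $\det M\neq 0$, then $\mathcal D$ is already Euclidean LCD by Proposition~\ref{LDC}, and $\mathbf a=(1,\dots,1)$ works. If $\det M=0$, the idea is to locate the threshold at which the "deleted-minor" determinants $\det(M_I)$ first fail to vanish. Consider the set of integers $t$ with $0\le t\le k-1$ such that $\det(M_I)=0$ for every $I\subseteq\{1,\dots,k\}$ with $\#I\le t$; it contains $t=0$ because $M_\emptyset=M$, and it cannot contain $t=k$ because of the convention $M_{\{1,\dots,k\}}=1\neq 0$. Hence it has a largest element $t_0\le k-1$, and maximality of $t_0$ forces the existence of some $J\subseteq\{1,\dots,k\}$ with $\#J=t_0+1$ and $\det(M_J)\neq 0$. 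This is exactly the hypothesis needed to invoke Theorem~\ref{LCD-Euclidean}.

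To conclude I would exhibit a suitable all-nonzero $\mathbf a$: set $a_j=1$ for $j\in\{1,\dots,n\}\setminus J$, and for $j\in J$ choose any $a_j\in\mathbb F_q\setminus\{0,1,-1\}$. Such a choice is possible precisely because $q>3$: in odd characteristic $\#(\mathbb F_q\setminus\{0,1,-1\})=q-3\ge 1$, and in characteristic two $\#(\mathbb F_q\setminus\{0,1\})=q-2\ge 1$. All coordinates of $\mathbf a$ are then nonzero, so the "furthermore" clause of Theorem~\ref{LCD-Euclidean} gives that $\mathcal D_{\mathbf a}$ is an Euclidean LCD $[n,k,d]$-code; transporting back yields the desired $\mathbf a$ for $\mathcal C$.

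I expect the one point demanding care — and the closest thing to an obstacle — to be the verification that the pair $(t_0,J)$ exists, i.e.\ that within the relevant nested family of submatrices some $M_J$ is nonsingular while all smaller ones vanish. But this is forced formally by the boundary convention $M_{\{1,\dots,k\}}=1$, so no inequality or genuine estimate enters; the rest is the short field-cardinality count above together with a direct citation of Theorem~\ref{LCD-Euclidean}.
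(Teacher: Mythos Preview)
Your argument is correct and matches the paper's proof essentially line for line: reduce to systematic form, split on whether $\det M$ vanishes, locate the minimal $t$ and corresponding $J$ with $\det(M_J)\neq 0$, use $q>3$ to pick $a_j\in\mathbb F_q^*\setminus\{1,-1\}$ on $J$ and $a_j=1$ elsewhere, and invoke Theorem~\ref{LCD-Euclidean}. You are in fact slightly more careful than the paper in two places---explicitly handling the passage to $G=[I_k:P]$ via a coordinate permutation, and justifying the existence of $(t_0,J)$ through the convention $M_{\{1,\dots,k\}}=1$---both of which the paper leaves implicit.
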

\begin{proof}
If $\mathcal C$ is an Euclidean LCD code, the result holds by choosing $\mathbf a=(1,1,\cdots,1)$.\\
If $\mathcal C$ is not an Euclidean LCD code, then $\det (GG^{T})=0$. Let $M=GG^{T}$. Then there exists a non-negative integer $t$ and a
subset $J$ of $\{1,2,\cdots,k\}$ with $\# J=t+1$ such that
$\det(M_I)=0$ for any subset of $\{1,2, \cdots, k\}$ with $0 \le \# I \le t$ and $\det(M_J)\neq 0$.
Since $q>3$, $\mathbb F_q^*\setminus \{1,-1\}\neq \emptyset$. Thus, one can choose $\mathbf a\in \mathbb F_q^n$ with $a_j\in \mathbb F_q^*\setminus \{-1, 1\}$ for $j\in J$ and $a_j=1$ for $j\in \{1,\dots ,n\}\setminus J$. From Theorem \ref{LCD-Euclidean}, $\mathcal C_{\bf a}$ is an Euclidean LCD code.
\end{proof}

\begin{corollary}
Let $q$ be a power of a prime with $q>3$. Then, an $[n,k,d]$-linear Euclidean LCD code over $\mathbb F_q$ exists if there is an $[n,k,d]$-linear code over $\mathbb F_q$.
\end{corollary}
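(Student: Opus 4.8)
The plan is to obtain this statement as an immediate consequence of the preceding corollary, combined with the elementary fact that rescaling coordinates by nonzero field elements is a code equivalence that leaves the length, dimension, and minimum distance unchanged. So essentially all the work has already been done in Theorem \ref{LCD-Euclidean} and Lemma \ref{matrix}; here one only has to package it.

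Concretely, I would start from an arbitrary $[n,k,d]$-linear code $\mathcal C$ over $\mathbb F_q$ (this is the hypothesis ``there is an $[n,k,d]$-linear code over $\mathbb F_q$''). Applying the preceding corollary, there exists $\mathbf a=(a_1,\dots,a_n)\in\mathbb F_q^n$ with $a_j\neq 0$ for every $j$ such that $\mathcal C_{\mathbf a}$ is an Euclidean LCD code. It then remains to verify that $\mathcal C_{\mathbf a}$ still has parameters $[n,k,d]$. For this, consider the map $\phi\colon(c_1,\dots,c_n)\mapsto(a_1c_1,\dots,a_nc_n)$ on $\mathbb F_q^n$; since all $a_j$ are nonzero, $\phi$ is a linear bijection with inverse given by scaling by $a_j^{-1}$, so $\dim\mathcal C_{\mathbf a}=\dim\phi(\mathcal C)=\dim\mathcal C=k$, and the length is visibly $n$. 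Moreover, for each coordinate $a_jc_j=0$ if and only if $c_j=0$, so $\phi$ preserves the Hamming weight of every vector; hence $\mathcal C$ and $\mathcal C_{\mathbf a}$ have the same weight distribution, and in particular the same minimum distance $d$. Therefore $\mathcal C_{\mathbf a}$ is the required $[n,k,d]$-linear Euclidean LCD code.

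There is no real obstacle at this step. The one place where the hypothesis $q>3$ is genuinely used — ensuring that the scalars attached to the indices in $J$ can be taken in $\mathbb F_q^\times\setminus\{1,-1\}$, which is nonempty exactly when $q>3$, while all coordinates remain nonzero so that the equivalence preserves $d$ — has already been dealt with inside the preceding corollary via Theorem \ref{LCD-Euclidean}. So the proof I would write is just the two or three lines above, citing the preceding corollary and noting the weight-preserving equivalence.
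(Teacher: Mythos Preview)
Your proposal is correct and matches the paper's approach: the corollary is stated in the paper without proof, as an immediate consequence of the preceding corollary, and your argument spells out exactly the intended reasoning. Note that the preservation of the parameters $[n,k,d]$ under rescaling by nonzero $a_j$ is in fact already recorded in the ``Furthermore'' clause of Theorem~\ref{LCD-Euclidean}, so even the weight-preservation check you include is implicit in the paper.
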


\begin{corollary}
Let $q$ be a power of a prime with $q>3$. Then, $\alpha_q^{E}(\delta)=\alpha_q^{lin}(\delta)$ for any $\delta \in [0,1]$. In particular,
\begin{align*}
\alpha_q^{E}(\delta)\ge 1-\delta-\frac{1}{A(q)}, \text{ for } \delta\in [0,1],
\end{align*}
where $A(q) $ is defined as in Proposition \ref{TVZ}.
\end{corollary}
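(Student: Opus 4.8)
The plan is to establish the two inequalities $\alpha_q^{E}(\delta)\le \alpha_q^{lin}(\delta)$ and $\alpha_q^{E}(\delta)\ge \alpha_q^{lin}(\delta)$, and then read off the displayed numerical bound from Proposition \ref{TVZ}. The first inequality requires nothing new: every Euclidean LCD code is in particular a linear code, so $U_q^{E}\subseteq U_q^{lin}$, and taking suprema gives $\alpha_q^{E}(\delta)\le \alpha_q^{lin}(\delta)$ (this is exactly the trivial observation recorded in Section \ref{Pre}).

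For the reverse inequality I would argue directly from the definition of $\alpha_q^{lin}(\delta)$ as a supremum over $U_q^{lin}$. Fix $\delta\in[0,1]$ and let $(\delta,R)\in U_q^{lin}$ be arbitrary, witnessed by a sequence of $[n_i,k_i,d_i]$ linear codes $\mathcal C_i$ over $\mathbb F_q$ with $n_i\to\infty$, $d_i/n_i\to\delta$ and $k_i/n_i\to R$. Since $q>3$, the preceding corollary applies to each $\mathcal C_i$: there is a vector $\mathbf a_i\in(\mathbb F_q^{\times})^{n_i}$, all of whose coordinates are nonzero, such that $(\mathcal C_i)_{\mathbf a_i}$ is an Euclidean LCD code. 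Because all coordinates of $\mathbf a_i$ are nonzero, the coordinatewise scaling $\mathbf c\mapsto(a_{i,1}c_1,\dots,a_{i,n_i}c_{n_i})$ is a linear bijection preserving Hamming weight, so $(\mathcal C_i)_{\mathbf a_i}$ is equivalent to $\mathcal C_i$ and has exactly the same parameters $[n_i,k_i,d_i]$. Hence the sequence $\big((\mathcal C_i)_{\mathbf a_i}\big)_i$ witnesses $(\delta,R)\in U_q^{E}$, so $\alpha_q^{E}(\delta)\ge R$. As $(\delta,R)$ ranges over $U_q^{lin}$ this yields $\alpha_q^{E}(\delta)\ge \sup\{R:(\delta,R)\in U_q^{lin}\}=\alpha_q^{lin}(\delta)$. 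Combined with the first inequality, $\alpha_q^{E}(\delta)=\alpha_q^{lin}(\delta)$ for all $\delta\in[0,1]$.

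The ``in particular'' clause is then immediate: Proposition \ref{TVZ} gives $\alpha_q^{lin}(\delta)\ge 1-\delta-\tfrac{1}{A(q)}$ for $\delta\in[0,1]$, and substituting the just-proved equality $\alpha_q^{lin}(\delta)=\alpha_q^{E}(\delta)$ finishes the proof. More generally, any lower bound for $\alpha_q^{lin}$ (such as the Gilbert--Varshamov bound) becomes a lower bound for $\alpha_q^{E}$.

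I do not expect a genuine obstacle here; the only point needing a modicum of care is the parameter-preservation step --- verifying that $\mathcal C_{\mathbf a}$ with $\mathbf a\in(\mathbb F_q^{\times})^n$ is a code equivalence and therefore shares the length, dimension, and minimum distance of $\mathcal C$ --- together with the routine remark that one works with arbitrary pairs in $U_q^{lin}$ rather than assuming the defining supremum $\alpha_q^{lin}(\delta)$ is attained.
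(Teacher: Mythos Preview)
Your proposal is correct and matches the paper's intended approach: the paper states this corollary without proof, treating it as an immediate consequence of the preceding two corollaries (existence of an equivalent Euclidean LCD code for $q>3$, hence existence of an $[n,k,d]$ LCD code whenever an $[n,k,d]$ linear code exists) together with the trivial inequality $\alpha_q^{E}(\delta)\le\alpha_q^{lin}(\delta)$ noted in Section~\ref{Pre} and Proposition~\ref{TVZ}. Your write-up simply makes explicit the routine passage through the definitions of $U_q^{lin}$ and $U_q^{E}$ that the paper leaves to the reader.
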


\begin{theorem}\label{LCD-Hermitian}
Let $q$ be a power of a prime, let $\mathcal C$ be an $[n,k,d]$-linear code over $\mathbb F_{q^2}$ with generator matrix $G=[I_k: P]$ and $M=GG^{T}$.
Suppose $t\leq k-1$ is a non-negative integer such that $\det(M_I)=0$ for any subset $I$ of $\{1,2, \cdots, k\}$ with $0 \le \# I \le t$ and there exists $J\subseteq \{1,\dots ,k\}$ of size $t+1$ such that
$\det(M_J)\neq 0$. Let ${\bf a}$ be any word of length $n$ such that $a_j\in \mathbb F_{q^2} \backslash  \mathbb (\mathbb F_{q^2}^*)^{q-1}$ for $j\in J$ and $a_j\in (\mathbb F_{q^2}^*)^{q-1}$ for $j\in \{1,\dots ,n\}\setminus J$, then $\mathcal C_{\mathbf a}$
is an $[n,k]$-linear code with Hermitian complementary dual. Furthermore, if
$a_j\neq 0$, then $\mathcal C_{\mathbf a}$
is an $[n,k,d]$-linear code with Hermitian complementary dual.
\end{theorem}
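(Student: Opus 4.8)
The plan is to run the proof of Theorem~\ref{LCD-Euclidean} almost verbatim, making the substitutions dictated by the Hermitian setting. Three changes are needed: the Euclidean Gram matrix $GG^T$ is replaced everywhere by the Hermitian Gram matrix $G\overline G^T$ (so $M$ here must be read as $G\overline G^T$, the exact analogue of the matrix appearing in Theorem~\ref{LCD-Euclidean}); the squaring map is replaced by the norm map $x\mapsto x\overline x=x^{q+1}$; and the set $\{1,-1\}=\{x\in\mathbb F_q^*:x^2=1\}$ is replaced by $(\mathbb F_{q^2}^*)^{q-1}$. I would begin by recording that, since $\mathbb F_{q^2}^*$ is cyclic of order $(q-1)(q+1)$, the subgroup $(\mathbb F_{q^2}^*)^{q-1}$ of $(q-1)$-th powers has order $q+1$ and therefore coincides with the group of norm-one elements; thus $a_j\in(\mathbb F_{q^2}^*)^{q-1}$ if and only if $a_j^{q+1}=1$, while $a_j\in\mathbb F_{q^2}\setminus(\mathbb F_{q^2}^*)^{q-1}$ (the case $a_j=0$ included) holds if and only if $a_j^{q+1}\neq 1$.

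Next I would set $G'=G\,diag_{\, n}[\mathbf a]$, a generator matrix of $\mathcal C_{\mathbf a}$, and compute its Hermitian Gram matrix. Since scaling the $j$-th column of $G$ by $a_j$ scales the contribution of that column to the $(i,i')$ entry of the Gram matrix by $a_j\overline{a_j}=a_j^{q+1}$, one gets $G'\overline{G'}^T=G\,D\,\overline G^T$, where $D$ is the $n\times n$ diagonal matrix with diagonal $(a_1^{q+1},\dots,a_n^{q+1})$. Writing $D=I_n+D'$ with $D'=diag_{\, n}[(a_1^{q+1}-1,\dots,a_n^{q+1}-1)]$ and using that $a_j^{q+1}=1$ for $j\notin J$ together with $J\subseteq\{1,\dots,k\}$, the matrix $D'$ is supported on the first $k$ positions; since the first $k$ columns of $G=[I_k:P]$ are the standard basis vectors (which have entries in $\mathbb F_q$, so are fixed by conjugation), the term $G\,D'\,\overline G^T$ collapses to the $k\times k$ diagonal matrix $diag_{\, k}[\mathbf u]$, where $\mathbf u$ is the length-$k$ word with $u_j=a_j^{q+1}-1$ for $1\le j\le k$. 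By the choice of $\mathbf a$, the support of $\mathbf u$ is exactly $J$, so $\mathbf u$ has Hamming weight $t+1$, and $G'\overline{G'}^T=M+diag_{\, k}[\mathbf u]$.

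The decisive step is then Lemma~\ref{matrix}, applied with $l=k$: its hypothesis is precisely that $\det(M_I)=0$ for every $I\subseteq\{1,\dots,k\}$ with $0\le\#I\le t$, which is assumed, so for the word $\mathbf u$ of Hamming weight $t+1$ and support $J$ it gives
\begin{align*}
\det\!\big(G'\overline{G'}^T\big)=\det\!\big(M+diag_{\, k}[\mathbf u]\big)=\Big(\prod_{j\in J}(a_j^{q+1}-1)\Big)\det(M_J)\neq 0,
\end{align*}
each factor $a_j^{q+1}-1$ being nonzero by the choice of $\mathbf a$ and $\det(M_J)\neq 0$ by hypothesis. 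By Proposition~\ref{LDC}, $\mathcal C_{\mathbf a}$ is a Hermitian LCD code. For the last assertion, if moreover $a_j\neq 0$ for all $j$, then $\mathbf a\in(\mathbb F_{q^2}^*)^n$, so $\mathcal C_{\mathbf a}$ is equivalent to $\mathcal C$ in the sense of Section~\ref{Pre}; equivalence preserves the dimension and all Hamming weights, hence $\mathcal C_{\mathbf a}$ is an $[n,k,d]$ Hermitian LCD code.

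I do not anticipate a genuine obstacle: the combinatorial content is entirely carried by Lemma~\ref{matrix}, and the shape of the argument is fixed by the Euclidean case. The one point that needs care --- the hard part, such as it is --- is the Gram-matrix bookkeeping: one must check that column scaling introduces the factor $a_j^{q+1}$ (not $a_j^2$), that the perturbation $D'$ is supported only on the identity block $I_k$ so that $G\,D'\,\overline G^T$ really is a diagonal matrix rather than a full-rank correction that would break the application of Lemma~\ref{matrix}, and that $(\mathbb F_{q^2}^*)^{q-1}$ is correctly identified with $\{x:x^{q+1}=1\}$, so that $u_j=a_j^{q+1}-1$ vanishes exactly off $J$ and is nonzero exactly on $J$.
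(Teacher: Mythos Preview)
Your proposal is correct and follows essentially the same route as the paper: form $G'=G\,diag_n[\mathbf a]$, compute $G'\overline{G'}^T=M+diag_k[\mathbf u]$ with $u_j=a_j^{q+1}-1$, apply Lemma~\ref{matrix} to get $\det(G'\overline{G'}^T)=\big(\prod_{j\in J}u_j\big)\det(M_J)\neq 0$, and conclude via Proposition~\ref{LDC}. You correctly flag that the statement's $M=GG^T$ must be read as $M=G\overline G^T$ (a typo in the paper), and you fill in the justification --- the identification $(\mathbb F_{q^2}^*)^{q-1}=\{x:x^{q+1}=1\}$ and the fact that the diagonal perturbation lives entirely in the $I_k$ block --- that the paper leaves implicit.
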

\begin{proof}
Let $G'$ be the generator matrix of $\mathcal C_{\mathbf a}$ obtained from $G$ by multiplying its $j$-th column by $a_j$ for $j\in \{1,2,\cdots n\}$ and let ${\bf u}$ be the word of length $n$ and support $J$ such that $u_j=a_j^{q+1}-1$ for $1\le j \le k$.
Then, $G'\overline{G'}^T=M+diag_{\, k}[{\bf u}]$. From Lemma \ref{matrix},
\begin{align*}
\det(G'\overline{G'}^T)=&\det(M+diag_{\, k}[{\bf u}])\\
=& {\Big (}\prod_{j\in J}u_j{\Big )}\det(M_J)\neq 0.\\
%\neq &0.
\end{align*}
By Proposition \ref{LDC}, $\mathcal C_{\mathbf a}$ is a Hermitian LCD code.
\end{proof}

\begin{proposition}
Let $q$ be a power of a prime  and let $\mathcal C$ be an $[n,k,d]$-linear code over $\mathbb F_q$ with generator matrix $G=[I_k: P]$. Let $M=G\overline{G}^{T}$.
Let $t\leq k-1$ be a non-negative integer such that $\det(M_I)=0$ for any subset of $\{1,2, \cdots, k\}$ with $0 \le \# I \le t$ and suppose there exists $J\subseteq \{1,\dots ,k\}$ of size $t+1$ such that
$\det(M_J)\neq 0$. Then, $t\ge h_{H}(\mathcal C)-1$.
\end{proposition}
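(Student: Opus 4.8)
The plan is to transcribe, essentially verbatim, the proof of the Euclidean counterpart (the proposition stated just after Theorem~\ref{LCD-Euclidean}), replacing every occurrence of the transpose product $GG^T$ by the conjugate–transpose product $G\overline{G}^T$, the Euclidean dual by the Hermitian dual $\perp_H$, and $\mathrm{Hull}_E$, $h_E$ by $\mathrm{Hull}_H$, $h_H$. The skeleton is: extract from $J$ a Hermitian LCD subcode $\mathcal C_1$ of $\mathcal C$ of dimension $k-(t+1)$, then derive a contradiction from a dimension count if $t+1$ were strictly smaller than $h_H(\mathcal C)$.

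\medskip

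\noindent First I would fix a subset $J\subseteq\{1,\dots,k\}$ of size $t+1$ with $\det(M_J)\neq 0$. Let $\mathbf b_1,\dots,\mathbf b_k$ be the rows of $G$, set $\mathcal C_1:=\mathrm{Span}\{\mathbf b_j: j\in\{1,\dots,k\}\setminus J\}$, and let $G_1$ be the matrix obtained from $G$ by deleting the rows indexed by $J$. Since $G=[I_k:P]$, the rows $\mathbf b_j$ are linearly independent, so $G_1$ is a genuine generator matrix of $\mathcal C_1$ and $\dim\mathcal C_1=k-(t+1)$. The one computational point is the identity $G_1\overline{G_1}^{\,T}=M_J$: deleting the rows of $G$ indexed by $J$ deletes exactly the corresponding rows of $G\overline G^{\,T}$, and deleting the rows of $\overline G$ indexed by $J$ deletes exactly the corresponding columns of $G\overline G^{\,T}$, so the $(a,b)$-entry of $G_1\overline{G_1}^{\,T}$ equals that of $M$ for all $a,b\notin J$. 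As $\det(M_J)\neq 0$, Proposition~\ref{LDC} (Hermitian case) shows $\mathcal C_1$ is a Hermitian LCD code, i.e. $\mathcal C_1\cap\mathcal C_1^{\perp_H}=\{0\}$.

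\medskip

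\noindent Then I would argue by contradiction. Suppose $t\le h_H(\mathcal C)-2$, equivalently $1\le t+1<h_H(\mathcal C)$. Then
\[
\dim\mathcal C_1+\dim\mathrm{Hull}_H(\mathcal C)=\big(k-(t+1)\big)+h_H(\mathcal C)\ge k+1>k=\dim\mathcal C .
\]
Since $\mathcal C_1\subseteq\mathcal C$ and $\mathrm{Hull}_H(\mathcal C)\subseteq\mathcal C$, the Grassmann dimension formula forces $\mathcal C_1\cap\mathrm{Hull}_H(\mathcal C)\neq\{0\}$. Pick $\mathbf c\neq 0$ in this intersection. From $\mathcal C_1\subseteq\mathcal C$ we get $\mathcal C^{\perp_H}\subseteq\mathcal C_1^{\perp_H}$; combined with $\mathbf c\in\mathrm{Hull}_H(\mathcal C)\subseteq\mathcal C^{\perp_H}$ and $\mathbf c\in\mathcal C_1$, this yields $0\neq\mathbf c\in\mathcal C_1\cap\mathcal C_1^{\perp_H}$, contradicting the Hermitian LCD property of $\mathcal C_1$. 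Hence $t+1\ge h_H(\mathcal C)$, i.e. $t\ge h_H(\mathcal C)-1$.

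\medskip

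\noindent I do not expect a genuine obstacle here: the whole argument is a line-by-line mirror of the Euclidean proposition. The only things requiring a moment's care are (i) checking that conjugation commutes correctly with row/column deletion so that $G_1\overline{G_1}^{\,T}=M_J$ really holds, and (ii) noting that Hermitian duality is still inclusion-reversing, which is immediate from the defining bilinear form $\sum_i b_i\overline{c_i}$. The case $h_H(\mathcal C)\le 1$ is trivial since then $t\ge 0\ge h_H(\mathcal C)-1$, so the contradiction step is only needed when $h_H(\mathcal C)\ge 2$.
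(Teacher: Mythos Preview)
Your proposal is correct and follows essentially the same approach as the paper's own proof: both delete the rows indexed by $J$ to obtain a Hermitian LCD subcode $\mathcal C_1$ of dimension $k-(t+1)$, then use the dimension inequality $\dim\mathcal C_1+\dim\mathrm{Hull}_H(\mathcal C)>\dim\mathcal C$ to force a nonzero element of $\mathcal C_1\cap\mathcal C_1^{\perp_H}$ and reach a contradiction. Your write-up is in fact a bit more careful than the paper's, since you spell out why $G_1\overline{G_1}^{\,T}=M_J$ and why Hermitian duality reverses inclusions.
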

\begin{proof}
Let $J\subseteq \{1,\dots ,k\}$ be any size $t+1$ such that
$\det(M_J)\neq 0$. Let $\mathcal C_1:=\text{Span}\{\mathbf b_j: j\in \{1,2,\cdots, n\} \setminus J\}$, where $\mathbf b_j$ is the $j$-row of $G$ and $G_1$ is obtained from $G$
by deleting the $j$-row of $G$ for $j \in J$. Then, $G_1$ is a generator matrix
of $\mathcal C_1$ and $\det (G_1\overline{G}_1^{T})=\det (M_J)\neq 0$. Thus, $\mathcal C_1$ is a Hermitian LCD code.\\
Suppose that $1\le t+1< h_{H}(\mathcal C)$. Then, $\text{dim}(\mathcal C_1)+\text{dim}(\text{Hull}_{H}(\mathcal C))=k-(t+1)+ h_{H}(\mathcal C)\ge k+1>\text{dim}(\mathcal C)$.
From $\mathcal C_1 \subseteq  \mathcal C$ and $\text{Hull}_{H}(\mathcal C) \subseteq \mathcal C$, $\mathcal C_1 \cap \text{Hull}_{H}(\mathcal C) \neq \{0\}$.
Choose any $\mathbf c \in \mathcal C_1 \cap \text{Hull}_{H}(\mathcal C) $. It's easy to observe that $\mathbf c \in \mathcal C_1 \cap \mathcal C_1^{\perp_H}$, which  contradicts with
$\mathcal C$ being Hermitian LCD.
\end{proof}

\begin{corollary}
Let $q$ be a power of a prime with $q>2$ and $\mathcal C$ be an $[n,k,d]$-linear code over $\mathbb F_{q^2}$. Then, there exists $\mathbf a \in (a_1,\cdots, a_n) \in \mathbb F_{q^2}^n$ with
$a_j \neq 0$ for any $1 \le j \le n$ such that $\mathcal C_{\mathbf a}$ is a Hermitian LCD code.
\end{corollary}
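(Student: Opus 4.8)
The plan is to follow the template of the Euclidean corollary that precedes Theorem~\ref{LCD-Hermitian}, with the multiplicative subgroup $\{1,-1\}\subseteq\mathbb F_q^*$ replaced by the subgroup $(\mathbb F_{q^2}^*)^{q-1}$ of $(q-1)$-st powers, and with Theorem~\ref{LCD-Euclidean} replaced by Theorem~\ref{LCD-Hermitian}. First I would dispose of the trivial case: if $\mathcal C$ is already a Hermitian LCD code, then $\mathbf a=(1,1,\dots,1)$ works since $\mathcal C_{\mathbf a}=\mathcal C$. So assume $\mathcal C$ is not Hermitian LCD. Fix a generator matrix $G=[I_k:P]$ and set $M=G\overline G^{T}$; by Proposition~\ref{LDC} we have $\det M=0$. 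Since $M_{\emptyset}=M$ has zero determinant while $M_{\{1,\dots,k\}}=1\ne 0$, there is a well-defined non-negative integer $t\le k-1$ such that $\det(M_I)=0$ for every $I\subseteq\{1,\dots,k\}$ with $\#I\le t$, yet $\det(M_J)\ne 0$ for some $J\subseteq\{1,\dots,k\}$ of size $t+1$; fix one such $J$. This is precisely the hypothesis needed to invoke Theorem~\ref{LCD-Hermitian}.

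Next I would check that the word $\mathbf a$ required by Theorem~\ref{LCD-Hermitian} can be chosen with all coordinates nonzero. The set $(\mathbb F_{q^2}^*)^{q-1}$ is the image of the group homomorphism $x\mapsto x^{q-1}$ on $\mathbb F_{q^2}^*$, hence a cyclic subgroup of order $(q^2-1)/\gcd(q-1,q^2-1)=q+1$; equivalently it is the norm-one subgroup $\{x\in\mathbb F_{q^2}^*:x^{q+1}=1\}$. Because $q>2$, one has $q+1<q^2-1$, so this subgroup is proper and $\mathbb F_{q^2}^*\setminus(\mathbb F_{q^2}^*)^{q-1}\ne\emptyset$. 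I would then pick $a_j\in\mathbb F_{q^2}^*\setminus(\mathbb F_{q^2}^*)^{q-1}$ for each $j\in J$ and $a_j=1\in(\mathbb F_{q^2}^*)^{q-1}$ for each $j\in\{1,\dots,n\}\setminus J$; every such $a_j$ is nonzero. Applying Theorem~\ref{LCD-Hermitian} to this $t$, $J$ and $\mathbf a$, and using its ``furthermore'' clause since all $a_j\ne 0$, we conclude that $\mathcal C_{\mathbf a}$ is an $[n,k,d]$-linear Hermitian LCD code, in particular a Hermitian LCD code as claimed.

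The only step that is not purely formal is the counting fact that $(\mathbb F_{q^2}^*)^{q-1}$ is a proper subgroup of $\mathbb F_{q^2}^*$, which is exactly where the hypothesis $q>2$ enters; it mirrors the role of $q>3$ (equivalently $\#\{1,-1\}=2<q-1$) in the Euclidean corollary, and for $q=2$ the subgroup $(\mathbb F_4^*)^{1}=\mathbb F_4^*$ is everything, so the construction genuinely fails there. Beyond this I expect no real obstacle: once $t$ and $J$ have been extracted and a suitable nonzero $\mathbf a$ chosen, Theorem~\ref{LCD-Hermitian} supplies the rest.
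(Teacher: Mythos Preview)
Your proposal is correct and follows essentially the same approach as the paper's own proof: dispose of the case where $\mathcal C$ is already Hermitian LCD, otherwise set $M=G\overline G^{T}$, extract the minimal $t$ and a subset $J$ of size $t+1$ with $\det(M_J)\ne 0$, use $q>2$ to pick $a_j\in\mathbb F_{q^2}^*\setminus(\mathbb F_{q^2}^*)^{q-1}$ for $j\in J$ and $a_j=1$ elsewhere, and invoke Theorem~\ref{LCD-Hermitian}. Your extra justification that $(\mathbb F_{q^2}^*)^{q-1}$ has order $q+1<q^2-1$ simply makes explicit a step the paper leaves implicit.
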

\begin{proof}
If $\mathcal C$ is a Hermitian LCD code, the result holds by choosing $\mathbf a=(1,1,\cdots,1)$.\\
If $\mathcal C$ is not a Hermitian LCD code, then $\det (G\overline{G}^{T})=0$. Let $M=G\overline{G}^{T}$. Then there exists a non-negative integer $t$ and a
subset $J$ of $\{1,2,\cdots,k\}$ with $\# J=t+1$ such that
$\det(M_I)=0$ for any subset of $\{1,2, \cdots, k\}$ with $0 \le \# I \le t$ and $\det(M_J)\neq 0$.
Since $q>2$, $\mathbb F_{q^2}\setminus \mathbb F_{q^2}^{q-1}\neq \emptyset$. Thus, one can choose $\mathbf a\in \mathbb F_{q^2}^n$ with $a_j\in \mathbb F_{q^2}\setminus \mathbb F_{q^2}^{q-1}$ for $j\in J$ and $a_j=1$ for $j\in \{1,\dots ,n\}\setminus J$. From Theorem \ref{LCD-Hermitian}, $\mathcal C_{\bf a}$ is a Hermitian LCD code.
\end{proof}

\begin{corollary}
Let $q$ be a power of a prime with $q>2$. Then, an $[n,k,d]$-linear Hermitian LCD code over $\mathbb F_{q^2}$ exists if there is an $[n,k,d]$-linear code over $\mathbb F_{q^2}$.
\end{corollary}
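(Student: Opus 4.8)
The plan is to obtain this as an immediate corollary of the previous result, observing that multiplication by a nowhere-zero word $\mathbf a$ is an isometry. First I would start with an arbitrary $[n,k,d]$-linear code $\mathcal C$ over $\mathbb F_{q^2}$; by the remarks in Section~\ref{Pre}, after a coordinate permutation (which changes nothing about the parameters or about being Hermitian LCD, since permutation equivalence preserves both) we may assume $\mathcal C$ has a generator matrix of the form $G=[I_k:P]$. Then I would invoke the preceding corollary, which (using $q>2$) produces a word $\mathbf a=(a_1,\dots,a_n)\in\mathbb F_{q^2}^n$ with every $a_j\neq 0$ such that $\mathcal C_{\mathbf a}$ is a Hermitian LCD code.

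Next I would check that $\mathcal C_{\mathbf a}$ has parameters $[n,k,d]$. The coordinatewise multiplication map $\phi_{\mathbf a}\colon (c_1,\dots,c_n)\mapsto (a_1c_1,\dots,a_nc_n)$ on $\mathbb F_{q^2}^n$ is $\mathbb F_{q^2}$-linear, and since each $a_j\in\mathbb F_{q^2}^{\times}$ it is bijective with inverse $\phi_{\mathbf a^{-1}}$ where $\mathbf a^{-1}=(a_1^{-1},\dots,a_n^{-1})$. In particular $\mathcal C_{\mathbf a}=\phi_{\mathbf a}(\mathcal C)$ has the same dimension $k$ as $\mathcal C$. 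Moreover $\phi_{\mathbf a}$ preserves Hamming weight: for any $\mathbf c$, the coordinate $a_jc_j$ is zero if and only if $c_j$ is zero, so $\mathrm{wt}(\phi_{\mathbf a}(\mathbf c))=\mathrm{wt}(\mathbf c)$; hence the minimum distance of $\mathcal C_{\mathbf a}$ equals that of $\mathcal C$, namely $d$. Therefore $\mathcal C_{\mathbf a}$ is an $[n,k,d]$-linear Hermitian LCD code over $\mathbb F_{q^2}$, which proves the claim.

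There is essentially no obstacle here; the whole content was already extracted in Theorem~\ref{LCD-Hermitian} and the corollary before this one. The only point requiring a word of care is that one must use the version of the previous corollary guaranteeing $a_j\neq 0$ for \emph{all} $j$ (so that the weight-preserving argument applies), rather than merely a version that fixes the minimum distance on the nose; but that stronger form is exactly what was proved, so the deduction is clean.
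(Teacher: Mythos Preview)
Your proposal is correct and follows exactly the route the paper intends: this corollary is stated without proof in the paper, as an immediate consequence of the preceding corollary (the existence of a nowhere-zero $\mathbf a$ making $\mathcal C_{\mathbf a}$ Hermitian LCD) together with the fact that coordinatewise multiplication by units preserves length, dimension, and minimum distance. Your added remark about first permuting to reach a generator matrix $[I_k:P]$ is a harmless clarification of what the paper leaves implicit.
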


\begin{corollary}
Let $q$ be a power of a prime with $q>2$. Then, $\alpha_{q^2}^{H}(\delta)=\alpha_{q^2}^{lin}(\delta)$ for any $\delta \in [0,1]$. In particular,
\begin{align*}
\alpha_{q^2}^{H}(\delta)\ge 1-\delta-\frac{1}{A(q^2)}, \text{ for } \delta\in [0,1],
\end{align*}
where $A(q^2) $ is defined as in  Proposition \ref{TVZ}.
\end{corollary}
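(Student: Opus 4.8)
The plan is to sandwich $\alpha_{q^2}^{H}(\delta)$ between copies of $\alpha_{q^2}^{lin}(\delta)$ and read off the equality. One direction is already free: every Hermitian LCD code over $\mathbb F_{q^2}$ is in particular a linear code over $\mathbb F_{q^2}$, so $U_{q^2}^{H}\subseteq U_{q^2}^{lin}$, and taking the supremum over the smaller set gives $\alpha_{q^2}^{H}(\delta)\le \alpha_{q^2}^{lin}(\delta)$ for every $\delta\in[0,1]$. This is precisely the trivial inequality recorded in Section \ref{Pre}.

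For the reverse inequality I would argue directly from the definition of $U_{q^2}^{lin}$. Fix $\delta\in[0,1]$ and let $(\delta,R)\in U_{q^2}^{lin}$ be arbitrary; by definition there is a sequence of $[n_i,k_i,d_i]$ linear codes $\mathcal C_i$ over $\mathbb F_{q^2}$ with $n_i\to\infty$, $d_i/n_i\to\delta$ and $k_i/n_i\to R$. By the preceding corollary (the Hermitian analogue of the existence statement, which rests on Theorem \ref{LCD-Hermitian}), for each $i$ there exists an $[n_i,k_i,d_i]$ Hermitian LCD code over $\mathbb F_{q^2}$ with \emph{exactly the same} length, dimension and minimum distance $n_i,k_i,d_i$. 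Hence this sequence of Hermitian LCD codes witnesses $(\delta,R)\in U_{q^2}^{H}$. Since $(\delta,R)\in U_{q^2}^{lin}$ was arbitrary, $U_{q^2}^{lin}\subseteq U_{q^2}^{H}$, and taking suprema yields $\alpha_{q^2}^{H}(\delta)\ge \alpha_{q^2}^{lin}(\delta)$. Combining the two inequalities gives $\alpha_{q^2}^{H}(\delta)=\alpha_{q^2}^{lin}(\delta)$ for all $\delta\in[0,1]$.

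The ``in particular'' clause is then immediate: apply the TVZ bound of Proposition \ref{TVZ} to the field $\mathbb F_{q^2}$ and substitute the equality just proved, so that $\alpha_{q^2}^{H}(\delta)=\alpha_{q^2}^{lin}(\delta)\ge 1-\delta-\frac{1}{A(q^2)}$ for $\delta\in[0,1]$, with $A(q^2)$ as in Proposition \ref{TVZ}.

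I do not expect a genuine obstacle here: the entire content sits in the previous corollary, and the only point requiring care is the bookkeeping with the sets $U_{q^2}^{lin}$ and $U_{q^2}^{H}$ — specifically the observation that the transformation $\mathcal C\mapsto\mathcal C_{\mathbf a}$ used there is a monomial equivalence with all $a_j\neq 0$, hence preserves all three parameters $n,k,d$, so the limiting pair $(\delta,R)$ is left unchanged when passing from $\mathcal C_i$ to its Hermitian LCD equivalent. The argument is the verbatim Hermitian mirror of the Euclidean corollary proved a few lines earlier, so no new idea is needed.
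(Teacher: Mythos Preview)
Your proposal is correct and is exactly the natural argument the paper has in mind; the paper actually states this corollary (and its Euclidean analogue) without any explicit proof, treating it as an immediate consequence of the preceding existence corollary together with the trivial inclusion $U_{q^2}^{H}\subseteq U_{q^2}^{lin}$ recorded in Section~\ref{Pre}. Your write-up simply spells out that implicit reasoning, so there is nothing to add or correct.
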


\subsection{Construction of LCD codes by extending linear codes}

\begin{lemma}
Let $\mathcal C$ be an $[n,k]$-linear over $\mathbb F_{q}$ with $h:=h_{E}(\mathcal C)>0$. Let $\{\mathbf c^i\}_{i=1}^{k}$ be a basis of $\mathcal C$ such that $\{\mathbf c^i\}_{i=1}^{h}$ is a basis
 of $\text{Hull}_{E}(\mathcal C)$. Then $\text{Span}\{\mathbf c^i: i=h+1,h+2, \cdots, k\}$ is an Euclidean LCD code.
\end{lemma}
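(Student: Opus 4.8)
The plan is to use the generator-matrix characterization of LCD codes from Proposition \ref{LDC}. Set $\mathcal D := \text{Span}\{\mathbf c^i : i = h+1, \dots, k\}$ and let $G$ be the $k \times n$ matrix whose rows are $\mathbf c^1, \dots, \mathbf c^k$, a generator matrix of $\mathcal C$. The object of interest is the Gram matrix $M = G G^{T}$, whose $(i,j)$ entry is the Euclidean inner product $\langle \mathbf c^i, \mathbf c^j\rangle$. I would reduce everything to showing that a certain submatrix of $M$ is nonsingular.

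First I would record the key structural fact: each $\mathbf c^i$ with $1 \le i \le h$ lies in $\text{Hull}_{E}(\mathcal C) \subseteq \mathcal C^{\perp}$, hence is orthogonal to every codeword of $\mathcal C$, in particular to all $\mathbf c^j$. Therefore the first $h$ rows and the first $h$ columns of $M$ vanish, so $M$ is block-diagonal with a zero top-left block and a $(k-h) \times (k-h)$ lower-right block $N$ with entries $\langle \mathbf c^i, \mathbf c^j\rangle$ for $h+1 \le i,j \le k$. Note that $N = G' (G')^{T}$, where $G'$ is the generator matrix of $\mathcal D$ formed by the rows $\mathbf c^{h+1}, \dots, \mathbf c^k$; these are linearly independent as part of a basis of $\mathcal C$, so $\dim \mathcal D = k-h$ and $G'$ is a genuine generator matrix. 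By Proposition \ref{LDC}, it suffices to prove that $N$ is nonsingular.

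The heart of the argument is that $N$ has trivial kernel. Suppose $\mathbf x = (x_{h+1}, \dots, x_k)$ satisfies $N \mathbf x^{T} = 0$, and put $\mathbf v = \sum_{j=h+1}^k x_j \mathbf c^j \in \mathcal D \subseteq \mathcal C$. Then $\langle \mathbf c^i, \mathbf v\rangle = 0$ for all $h+1 \le i \le k$ by hypothesis, and also for all $1 \le i \le h$ since those $\mathbf c^i$ lie in $\mathcal C^{\perp}$; hence $\mathbf v$ is orthogonal to a spanning set of $\mathcal C$, i.e. $\mathbf v \in \mathcal C^{\perp}$. Combined with $\mathbf v \in \mathcal C$ this gives $\mathbf v \in \text{Hull}_{E}(\mathcal C) = \text{Span}\{\mathbf c^1, \dots, \mathbf c^h\}$. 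But $\mathbf v$ also lies in $\text{Span}\{\mathbf c^{h+1}, \dots, \mathbf c^k\}$, and these two spans meet only in $0$ because $\{\mathbf c^1, \dots, \mathbf c^k\}$ is a basis; thus $\mathbf v = 0$, forcing $\mathbf x = 0$. So $N$ is invertible and $\mathcal D$ is Euclidean LCD.

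I do not expect a real obstacle; the only point needing care is the observation that hull vectors are orthogonal to all of $\mathcal C$ (not merely to one another), which is exactly what produces the block structure — and, equivalently, what lets one identify the forbidden intersection $\mathcal D \cap \mathcal D^{\perp}$ with a subspace of the hull. In fact I would probably present a matrix-free version as the cleaner proof: if $\mathbf v \in \mathcal D \cap \mathcal D^{\perp}$, then $\mathbf v$ is orthogonal to $\mathbf c^{h+1}, \dots, \mathbf c^k$ (since $\mathbf v \in \mathcal D^{\perp}$) and to $\mathbf c^1, \dots, \mathbf c^h$ (since $\mathbf v \in \mathcal C$ and these span a subspace of $\mathcal C^{\perp}$), whence $\mathbf v \in \mathcal C \cap \mathcal C^{\perp} = \text{Hull}_{E}(\mathcal C)$; so $\mathbf v$ lies in the trivial intersection of $\text{Span}\{\mathbf c^1,\dots,\mathbf c^h\}$ and $\text{Span}\{\mathbf c^{h+1},\dots,\mathbf c^k\}$, giving $\mathbf v = 0$.
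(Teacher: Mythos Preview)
Your proposal is correct, and the matrix-free version you give at the end is essentially the paper's own proof: it takes $\mathbf c\in\mathcal C_1\cap\mathcal C_1^{\perp}$, uses orthogonality to both $\mathcal C_1$ and $\text{Hull}_E(\mathcal C)$ to conclude $\mathbf c\in\mathcal C^{\perp}$, and then invokes $\mathcal C_1\cap\mathcal C^{\perp}=\{0\}$ (which is exactly your observation that the two spans meet trivially). Your Gram-matrix formulation is just a repackaging of the same idea via the block structure of $GG^{T}$.
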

\begin{proof}
Let $\mathcal C_1 =\text{Span}\{\mathbf c^i: i=h+1,h+2, \cdots, k\}$ and $\mathbf c \in \mathcal C_1 \cap \mathcal C_1^{\perp}$. Then, for any
$\mathbf c'=(c'_1, \cdots, c'_n) \in \mathcal C_1$ and $\mathbf c''=(c''_1, \cdots, c''_n) \in \text{Hull}_{E}(\mathcal C)$, $\sum_{i=1}^n c_ic'_i=0$ and $\sum_{i=1}^n c_ic''_i=0$.
Thus, $\mathbf c \in \mathcal C^{\perp}$. Hence, $\mathbf c =0$ from $\mathcal C_1 \cap \mathcal C^{\perp}=\{0\}$. It completes the proof of this lemma.

\end{proof}
Let $\mathcal C$ be a linear code and $t$ be a positive integer. $L=(l_1,\cdots, l_t)$ denotes a linear transform from $\mathcal C$ to $\mathbb F_q^t$ defined by
\begin{align*}
\mathbf c \longmapsto (l_1(\mathbf c), \cdots, l_t(\mathbf c)), \text{ for } \mathbf c\in \mathcal C,
\end{align*}
where $l_i$ is any linear form over $\mathcal C$  for
 $i \in \{1,2, \cdots, t\}$.
Define
\begin{align}\label{C-L}
\mathcal C_{L}:=\{(\mathbf c, l_1(\mathbf c), \cdots, l_t(\mathbf c)): \mathbf c\in \mathcal C\}.
\end{align}
Let $\text{Ker} (L):=\{\mathbf c\in \mathcal C: L(\mathbf c)=0\}$, $\text{Im}(L):= \{L(\mathbf c): \mathbf c \in \mathcal C\}$ and $\text{Rank}(L):=\text{dim}(\text{Im}(L) )$.\\
In the following statement, we give a sufficient condition such that  the code $\mathcal C_L$ does not give rise to an  Euclidean LCD code.
\begin{theorem}
Let $\mathcal C$ be an $[n,k]$-linear over $\mathbb F_{q}$ with $h:=h_{E}(\mathcal C)>0$ and $L$ be a linear transform from $\mathcal C$ to $\mathbb F_q^t$ such that
$\mathcal C= \text{Ker} (L) + \text{Hull}(\mathcal C)$. If $\text{Rank}(L)<h$, then $\mathcal C_L$ (defined by Equation (\ref{C-L})) is not an Euclidean LCD code.\end{theorem}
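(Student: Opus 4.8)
The plan is to exhibit an explicit nonzero codeword of $\mathcal C_L$ lying in its own Euclidean dual; this forces $\text{Hull}_E(\mathcal C_L)\ne\{0\}$, so $\mathcal C_L$ is not LCD (equivalently, by Proposition~\ref{LDC}, a generator matrix $G'$ of $\mathcal C_L$ has $G'G'^T$ singular). The natural witness is a word of the shape $(\mathbf c,\mathbf 0)\in\mathbb F_q^{n+t}$, where $\mathbf c\in\mathcal C$ is chosen so that the appended coordinates $l_1(\mathbf c),\dots,l_t(\mathbf c)$ all vanish and $\mathbf c$ is orthogonal to all of $\mathcal C$; then orthogonality of $(\mathbf c,\mathbf 0)$ to every word of $\mathcal C_L$ reduces to the single fact $\mathbf c\in\mathcal C^\perp$. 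Concretely, the right choice is a nonzero $\mathbf c\in\text{Hull}_E(\mathcal C)\cap\text{Ker}(L)$.

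First I would note that $\phi\colon\mathcal C\to\mathbb F_q^{n+t}$, $\mathbf c\mapsto(\mathbf c,l_1(\mathbf c),\dots,l_t(\mathbf c))$, is injective (its first $n$ coordinates recover $\mathbf c$), so $\mathcal C_L=\phi(\mathcal C)$ has dimension $k$ and $\phi$ carries nonzero words to nonzero words. Next comes the only substantive step, a dimension count inside $\mathcal C$: by rank--nullity $\dim\text{Ker}(L)=k-\text{Rank}(L)$, hence
\[
\dim\bigl(\text{Hull}_E(\mathcal C)\cap\text{Ker}(L)\bigr)\ \ge\ h+\bigl(k-\text{Rank}(L)\bigr)-k\ =\ h-\text{Rank}(L)\ \ge\ 1,
\]
where the final inequality is exactly the hypothesis $\text{Rank}(L)<h$. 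So we may pick a nonzero $\mathbf c\in\text{Hull}_E(\mathcal C)\cap\text{Ker}(L)$.

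Finally I would verify that this witness works. Since $L(\mathbf c)=\mathbf 0$, the word $\mathbf x:=(\mathbf c,\mathbf 0)=\phi(\mathbf c)$ is a nonzero element of $\mathcal C_L$. For an arbitrary $\mathbf y=\phi(\mathbf c')=(\mathbf c',l_1(\mathbf c'),\dots,l_t(\mathbf c'))\in\mathcal C_L$ with $\mathbf c'\in\mathcal C$, the standard inner product splits as $\langle\mathbf x,\mathbf y\rangle=\langle\mathbf c,\mathbf c'\rangle+\langle\mathbf 0,(l_i(\mathbf c'))_i\rangle=\langle\mathbf c,\mathbf c'\rangle$, and this is $0$ because $\mathbf c\in\text{Hull}_E(\mathcal C)\subseteq\mathcal C^\perp$ while $\mathbf c'\in\mathcal C$. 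Therefore $\mathbf x\in\mathcal C_L\cap\mathcal C_L^\perp$, so $h_E(\mathcal C_L)\ge 1$ and $\mathcal C_L$ is not an Euclidean LCD code.

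I do not anticipate a genuine obstacle: the proof is essentially the dimension inequality together with a one-line orthogonality check, very much in the spirit of the preceding lemma. (One even observes that the hypothesis $\mathcal C=\text{Ker}(L)+\text{Hull}_E(\mathcal C)$ is not actually needed for this implication — it only upgrades the displayed inequality to the equality $\dim(\text{Hull}_E(\mathcal C)\cap\text{Ker}(L))=h-\text{Rank}(L)$ — so it is presumably kept in order to pair with a converse statement.)
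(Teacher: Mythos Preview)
Your proof is correct, but the paper argues differently. The paper chooses a basis $\{\mathbf c^i\}_{i=1}^k$ of $\mathcal C$ with $\{\mathbf c^i\}_{i=1}^h$ a basis of $\text{Hull}_E(\mathcal C)$ and $\mathbf c^{h+1},\dots,\mathbf c^k\in\text{Ker}(L)$ (this is where the hypothesis $\mathcal C=\text{Ker}(L)+\text{Hull}_E(\mathcal C)$ enters), builds the corresponding generator matrix $G'$ of $\mathcal C_L$, and observes that $G'G'^T$ is block-diagonal with upper-left block $M_2=G_2G_2^T$, where $G_2$ is the $h\times t$ matrix of appended coordinates $L(\mathbf c^i)$ for $i\le h$. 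Lemma~\ref{rank-GG} then gives $\text{Rank}(M_2)\le\text{Rank}(G_2)\le\text{Rank}(L)<h$, so $\det(M_2)=0$, hence $\det(G'G'^T)=0$ and $\mathcal C_L$ is not LCD by Proposition~\ref{LDC}.

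Your route is more direct: a single dimension inequality inside $\mathcal C$ produces a nonzero $\mathbf c\in\text{Hull}_E(\mathcal C)\cap\text{Ker}(L)$, and then $(\mathbf c,\mathbf 0)$ is manifestly a nonzero element of $\mathcal C_L\cap\mathcal C_L^\perp$. This avoids choosing a special basis, avoids the block Gram computation, and does not invoke Lemma~\ref{rank-GG}. You are also right that your argument never uses $\mathcal C=\text{Ker}(L)+\text{Hull}_E(\mathcal C)$; the paper's matrix approach, by contrast, genuinely needs that hypothesis to force the off-diagonal blocks to vanish. What the paper's approach buys is a more structural picture of $G'G'^T$ that dovetails with the converse theorem immediately following it, where the same block form is used to show invertibility when $L$ is surjective onto $\mathbb F_q^h$.
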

\begin{proof}
From $\mathcal C= \text{Ker} (L) + \text{Hull}(\mathcal C)$, there is   a basis $\{\mathbf c^i\}_{i=1}^{k}$ of $\mathcal  C$ such that
 $\{\mathbf c^i\}_{i=1}^{h}$ is a basis of $\text{Hull}_{E}(\mathcal C)$ and $\{\mathbf c^i\}_{i=h+1}^{n}$ is a basis of $\text{Ker} (L)$. Let $G$ be the generator of $\mathcal C$
with the $i$-th row being $\mathbf c^i$ and $M=GG^T$. Let $G_1$ be the generator of $\mathcal C_1:=\text{Span}\{\mathbf c^i: i=h+1,h+2, \cdots, k\}$
with the $i$-th row being $\mathbf c^{h+i}$ and $M_{k-h,k-h}=G_1G_1^T$. Then, $M$ has the form
\begin{align*}
M=\left[  \begin{matrix}
  0_{h,h} & 0_{h,k-h}\\
  0_{k-h,h} & M_{k-h,k-h}
  \end{matrix}
 \right],
\end{align*}
where $0_{*,*}$ is the matrix with all entries being $0$.

Let $G_2$ be the matrix with the $i$-th row being $( l_1(\mathbf c^i), \cdots, l_t(\mathbf c^i))$ for $i\in \{1,\cdots, h\}$ and $M_2=G_2 G_2^{T}$.
Then $G_2$ is a generator matrix of $\mathcal C_L$. Then,
\begin{align*}
M=& \left[  \begin{matrix}
  0_{h,h}+ M_2 & 0_{h,k-h}\\
  0_{k-h,h} & M_{k-h,k-h}
  \end{matrix}
 \right]
 =\left[  \begin{matrix}
   M_2 & 0_{h,k-h}\\
  0_{k-h,h} & M_{k-h,k-h}
  \end{matrix}
 \right].
\end{align*}
Since $t<h$, $\det (M_2)=0$ from Lemma \ref{rank-GG}. Hence, $\det (M)=\det (M_2) \det (M_{k-h,k-h})=0$ and $\mathcal C_L$ is not an Euclidean LCD code.
\end{proof}

\begin{theorem}
Let $\mathcal C$ be an $[n,k]$-linear over $\mathbb F_q$ with $h:=h_{E}(\mathcal C)>0$
and $L$ be a linear transform from $\mathcal C$ to $\mathbb F_q^h$ such that
$\mathcal C= \text{Ker} (L) + \text{Hull}(\mathcal C)$. If the linear transform $L$   is surjective, then $\mathcal C_L$ (defined by Equation (\ref{C-L})) is an Euclidean LCD code.
\end{theorem}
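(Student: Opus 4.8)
The plan is to mimic the block-decomposition argument of the previous theorem, and then to use the surjectivity of $L$ to show that the extra diagonal block, which was singular there, is now \emph{invertible}.

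First I would note that $L$ surjective onto $\mathbb F_q^h$ gives $\dim\text{Ker}(L)=k-h$; since $\dim\text{Hull}_{E}(\mathcal C)=h$ and $\mathcal C=\text{Ker}(L)+\text{Hull}_{E}(\mathcal C)$ by hypothesis, a dimension count forces $\text{Ker}(L)\cap\text{Hull}_{E}(\mathcal C)=\{0\}$, so $\mathcal C=\text{Ker}(L)\oplus\text{Hull}_{E}(\mathcal C)$. I then pick a basis $\{\mathbf c^i\}_{i=1}^{k}$ of $\mathcal C$ with $\{\mathbf c^i\}_{i=1}^{h}$ a basis of $\text{Hull}_{E}(\mathcal C)$ and $\{\mathbf c^i\}_{i=h+1}^{k}$ a basis of $\text{Ker}(L)$, and let $G$ be the generator matrix of $\mathcal C$ with $i$-th row $\mathbf c^i$, $M=GG^{T}$. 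Since $\mathbf c^i\in\text{Hull}_{E}(\mathcal C)\subseteq\mathcal C^{\perp}$ for $i\le h$, we get $\mathbf c^i\cdot\mathbf c^j=0$ whenever $i\le h$, so, exactly as in the previous proof,
\[
M=\left[\begin{matrix} 0_{h,h} & 0_{h,k-h}\\ 0_{k-h,h} & M_{k-h,k-h}\end{matrix}\right],
\]
where $M_{k-h,k-h}=G_1G_1^{T}$ and $G_1$ generates $\mathcal C_1:=\text{Span}\{\mathbf c^i:\,h+1\le i\le k\}$. By the Lemma above, $\mathcal C_1$ is an Euclidean LCD code, hence $\det(M_{k-h,k-h})\neq 0$ by Proposition \ref{LDC}.

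Next I would take $G'=[\,G\mid G_2\,]$ as the generator matrix of $\mathcal C_L$, where $G_2$ has $i$-th row $L(\mathbf c^i)=(l_1(\mathbf c^i),\dots,l_h(\mathbf c^i))$ (this is a generator matrix since deleting the last $h$ columns returns $G$). Then $G'G'^{T}=M+G_2G_2^{T}$, and because $L(\mathbf c^i)=0$ for $i\ge h+1$, the matrix $G_2G_2^{T}$ has only its top-left $h\times h$ block nonzero, equal to $M_2:=\widetilde G_2\widetilde G_2^{T}$ with $\widetilde G_2$ the square $h\times h$ matrix whose rows are $L(\mathbf c^1),\dots,L(\mathbf c^h)$. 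Hence $G'G'^{T}$ is block-diagonal and $\det(G'G'^{T})=\det(M_2)\det(M_{k-h,k-h})$. The only genuinely new point is $\det(M_2)\neq 0$: from $\text{Ker}(L)\cap\text{Hull}_{E}(\mathcal C)=\{0\}$ the restriction $L|_{\text{Hull}_{E}(\mathcal C)}$ is injective, hence (dimensions matching) an isomorphism onto $\mathbb F_q^{h}$, so $L(\mathbf c^1),\dots,L(\mathbf c^h)$ form a basis of $\mathbb F_q^h$, $\widetilde G_2$ is invertible, and $\det(M_2)=(\det\widetilde G_2)^{2}\neq 0$. Therefore $\det(G'G'^{T})\neq 0$ and $\mathcal C_L$ is an Euclidean LCD code by Proposition \ref{LDC}.

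The argument is mostly bookkeeping; the step I would be most careful about is the passage from ``$L$ surjective'' to ``$\widetilde G_2$ square and invertible'', which is exactly why the hypothesis couples $\text{Rank}(L)=h$ with $\mathcal C=\text{Ker}(L)+\text{Hull}(\mathcal C)$ — together they yield the direct sum $\mathcal C=\text{Ker}(L)\oplus\text{Hull}_{E}(\mathcal C)$. Note also that, unlike the column-scaling construction of Theorem \ref{LCD-Euclidean}, no restriction on $q$ is needed here, since $\det(\widetilde G_2\widetilde G_2^{T})=(\det\widetilde G_2)^2$ is nonzero over any $\mathbb F_q$.
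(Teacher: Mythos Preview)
Your proof is correct. The paper itself actually omits the proof of this theorem entirely, but your argument is precisely the natural companion to the paper's proof of the preceding (negative) theorem: you reuse the same block decomposition
\[
G'G'^{T}=\begin{pmatrix} M_2 & 0\\ 0 & M_{k-h,k-h}\end{pmatrix},
\]
invoke the Lemma to get $\det(M_{k-h,k-h})\neq 0$, and then replace the paper's rank-deficiency argument (Lemma~\ref{rank-GG} giving $\det(M_2)=0$) by the observation that surjectivity of $L$ together with $\mathcal C=\text{Ker}(L)+\text{Hull}_E(\mathcal C)$ forces a direct sum, so $\widetilde G_2$ is square and invertible and $\det(M_2)=(\det\widetilde G_2)^2\neq 0$. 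This is exactly the intended line, and your care in deriving the direct-sum decomposition from the two hypotheses is the right emphasis.
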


\begin{corollary}
If there is an $[n,k,d]$-linear code $\mathcal C$ over $\mathbb F_q$ with $h:=h_{E}(\mathcal C)>0$, then there is a linear Euclidean LCD code with parameters $[n+h, k, \ge d]$.
\end{corollary}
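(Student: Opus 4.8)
The plan is to apply the immediately preceding theorem — the one asserting that $\mathcal C_L$ is an Euclidean LCD code whenever $L\colon \mathcal C\to\mathbb F_q^h$ is a surjective linear transform with $\mathcal C=\text{Ker}(L)+\text{Hull}_{E}(\mathcal C)$ — to a carefully chosen $L$. First I would fix a basis $\{\mathbf c^1,\dots,\mathbf c^k\}$ of $\mathcal C$ adapted to the hull, i.e. such that $\{\mathbf c^1,\dots,\mathbf c^h\}$ is a basis of $\text{Hull}_{E}(\mathcal C)$; this is possible because $\text{Hull}_{E}(\mathcal C)$ is a subspace of $\mathcal C$ of dimension $h$. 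Then I would define the linear transform $L=(l_1,\dots,l_h)$ from $\mathcal C$ to $\mathbb F_q^h$ by prescribing it on this basis: $L(\mathbf c^i)=\mathbf e_i$ for $1\le i\le h$, where $\mathbf e_i$ denotes the $i$-th standard basis vector of $\mathbb F_q^h$, and $L(\mathbf c^i)=\mathbf 0$ for $h<i\le k$, extended by $\mathbb F_q$-linearity.

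Next I would verify the two hypotheses of that theorem. Surjectivity is clear, since $\text{Im}(L)$ contains $\mathbf e_1,\dots,\mathbf e_h$. For the sum condition, note that $\mathbf c^{h+1},\dots,\mathbf c^k$ all lie in $\text{Ker}(L)$, hence $\mathcal C=\text{Span}\{\mathbf c^{h+1},\dots,\mathbf c^k\}+\text{Hull}_{E}(\mathcal C)\subseteq\text{Ker}(L)+\text{Hull}_{E}(\mathcal C)\subseteq\mathcal C$, so equality holds. The theorem then gives that $\mathcal C_L=\{(\mathbf c,l_1(\mathbf c),\dots,l_h(\mathbf c)):\mathbf c\in\mathcal C\}$ is an Euclidean LCD code.

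Finally I would read off the parameters. By construction $\mathcal C_L$ has length $n+h$. The map $\mathbf c\mapsto(\mathbf c,L(\mathbf c))$ is linear and injective — its first $n$ coordinates already recover $\mathbf c$ — so $\dim\mathcal C_L=\dim\mathcal C=k$. For the minimum distance, every nonzero codeword of $\mathcal C_L$ has the form $(\mathbf c,L(\mathbf c))$ with $\mathbf c\ne\mathbf 0$, and its Hamming weight equals $\mathrm{wt}(\mathbf c)+\mathrm{wt}(L(\mathbf c))\ge\mathrm{wt}(\mathbf c)\ge d$; hence the minimum distance of $\mathcal C_L$ is at least $d$. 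Therefore $\mathcal C_L$ is a linear Euclidean LCD code with parameters $[n+h,k,\ge d]$.

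There is no genuinely hard step: the substance is packaged in the preceding theorem, and the work here is only to produce an $L$ that is simultaneously surjective and satisfies $\mathcal C=\text{Ker}(L)+\text{Hull}_{E}(\mathcal C)$, which the hull-adapted basis does at a stroke. The only points deserving a second glance are the injectivity of $\mathbf c\mapsto(\mathbf c,L(\mathbf c))$ (immediate) and the degenerate case $h=k$, where $\text{Span}\{\mathbf c^{h+1},\dots,\mathbf c^k\}$ is the zero code but still $\mathcal C=\text{Hull}_{E}(\mathcal C)=\text{Ker}(L)+\text{Hull}_{E}(\mathcal C)$ and $L$ is an isomorphism, so the argument is unaffected.
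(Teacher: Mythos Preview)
Your proposal is correct. The paper states this corollary without proof, and your argument is precisely the intended derivation from the preceding theorem: pick a hull-adapted basis, define $L$ to be the identity on the hull coordinates and zero on the complement, then verify surjectivity and $\mathcal C=\text{Ker}(L)+\text{Hull}_E(\mathcal C)$, and finally read off the parameters of $\mathcal C_L$.
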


\section{Concluding Remarks}\label{sect:concluding}
LCD codes have applications in information protection.   This paper is devoted to determine all possible Euclidean and Hermitian LCD  codes.
We completely determine all Euclidean LCD codes over $\mathbb F_q (q>3)$ and all Hermitian LCD codes over $\mathbb F_{q^2} (q>2)$   for all possible parameters.
More precisely,  we introduce a general construction of LCD codes from any linear codes. Further, we show that any linear code over
$\mathbb F_{q} (q>3)$ is equivalent to an Euclidean LCD code and any linear code over
$\mathbb F_{q^2} (q>2)$ is equivalent to a Hermitian LCD code. Consequently an $[n,k,d]$-linear  Euclidean LCD code over $\mathbb F_q$ with $q>3$ exists if there is an $[n,k,d]$-linear code over $\mathbb F_q$ and an $[n,k,d]$-linear Hermitian LCD code over $\mathbb F_{q^2}$ with $q>2$ exists if there is an $[n,k,d]$-linear code over $\mathbb F_{q^2}$. Hence, when $q>3$, $q$-ary Euclidean
LCD codes are as good as $q$-ary linear codes. When $q>2$, $q^2$-ary Hermitian
LCD codes are as good as $q^2$-ary linear codes. We finally, present a construction of LCD codes by extending linear codes. Our future work concerns the constructions and  the classification of Euclidean LCD codes over $\mathbb F_2$ or $\mathbb F_3$ and Hermitian LCD codes over $\mathbb F_4$.

\end{document}